\documentclass[aps,pra,twocolumn,superscriptaddress,floatfix,
nofootinbib,showpacs,longbibliography]{revtex4-2}
\usepackage[utf8]{inputenc}  
\usepackage[T1]{fontenc}     
\usepackage[british]{babel}  
\usepackage[sc,osf]{mathpazo}
\usepackage[scaled=0.86]{berasans}  
\usepackage[colorlinks=true, citecolor=blue, urlcolor=blue]{hyperref}  
\usepackage{graphicx}
\usepackage[babel]{microtype}  
\usepackage{amsmath,amssymb,amsthm,bm,amsfonts,mathrsfs,bbm} 
\usepackage{booktabs}
\usepackage{tabularx}
\usepackage{array}

\usepackage{xspace}  
\usepackage{comment}
\usepackage{pgf,tikz}
\usepackage{xcolor}
\usepackage{multirow}
\usepackage{array}
\usepackage{bigstrut}
\usepackage{braket}
\usepackage{color}
\usepackage{natbib}
\usepackage{multirow}
\usepackage{mathtools}
\usepackage{float}
\usepackage{xcolor,colortbl}
\usepackage{physics}
\usepackage{amsmath}
\usepackage{color}
\usepackage[justification=justified, format=plain]{subcaption}
\usepackage[justification=raggedright]{caption}

\newcommand{\be}{\begin{equation}}
\newcommand{\ee}{\end{equation}}
\newcommand{\ba}{\begin{eqnarray}}
\newcommand{\ea}{\end{eqnarray}}

\newtheorem{theorem}{Theorem}

\newtheorem{definition}{Definition}
\newtheorem{proposition}{Proposition}
\newtheorem{observation}{Observation}
\newtheorem{example}{Example}
\newtheorem{remark}{Remark}

\def\>{\rangle}
\def\<{\langle}

\usepackage{tabularx}
\usepackage{booktabs}

\begin{document}

\title{
Witness-based quantification of non-Markovianity  without complete Choi state reconstruction}	

\author{Pritam Roy}
\email{roy.pritamphy@gmail.com}
\affiliation{S. N. Bose National Centre for Basic Sciences, Block JD, Sector III, Salt Lake, Kolkata 700 106, India}

\author{Saheli Mukherjee}
\email{mukherjeesaheli95@gmail.com}
\affiliation{S. N. Bose National Centre for Basic Sciences, Block JD, Sector III, Salt Lake, Kolkata 700 106, India}

\author{Bivas Mallick}
\email{bivasqic@gmail.com}
\affiliation{S. N. Bose National Centre for Basic Sciences, Block JD, Sector III, Salt Lake, Kolkata 700 106, India}
\affiliation{Statistics and Mathematics Unit, Indian Statistical Institute, 560059 Bangalore,
India}

\begin{abstract}
 Quantifying non-Markovianity is a fundamental problem in the theory of open quantum systems, as it is essential for exploiting environmental memory as a resource in quantum technologies. Existing resource-theoretic and operational quantifiers, however, generally require complete knowledge of the intermediate dynamical map and are therefore experimentally demanding. Here, we develop a witness-based framework for the quantification of non-Markovianity that bypasses complete reconstruction of the Choi state associated with the intermediate dynamical map. We first show that every non-Markovianity witness provides a certifiable lower bound on the geometric resource measure of non-Markovianity and identify the necessary and sufficient conditions under which this bound is tight. Motivated by the fact that non-positive-semidefinite witnesses are generally difficult to measure in real experimental scenario, we then focus on positive-semidefinite witnesses, which admit a natural implementation as positive operator valued measure (POVM) elements. We prove that every positive witness yields a certified lower bound on the Rivas--Huelga--Plenio (RHP) measure, thereby establishing a complete hierarchy connecting witness-based quantification, the RHP measure, and the geometric distance measure. Applying our framework to amplitude-damping and general Pauli dynamics, we find that the optimal positive witnesses reduce to simple Bell-parity projectors. Furthermore, we show that the resulting witness values can be obtained directly from a small number of experimentally accessible correlation measurements, thereby eliminating the need for complete Choi-state reconstruction and providing an efficient and experimentally feasible approach for the quantification of non-Markovianity in open quantum systems.

\end{abstract}

\maketitle

\section{Introduction}
 The theory of open quantum systems provides the fundamental framework for describing the dynamics of realistic quantum systems, where unavoidable interactions with the surrounding environment give rise to dissipation, decoherence, and irreversible evolution \cite{alicki,lindblad,gorini,breuer,RHPreview,breuerN,alonso,chruscinski2022dynamical}. Traditionally, these dynamics are described within the \emph{Markovian} approximation, where the system--environment coupling is sufficiently weak and the environment retains no memory of its interaction with the system. However, when the coupling becomes strong or the environmental correlation time is comparable to the intrinsic system timescale, this approximation breaks down, leading to the \emph{non-Markovian} regime \cite{RHPreview,breuerN,alonso,chruscinski2022dynamical,Bellomo2007,arend,Bhattacharya17,Bhattacharya20,Maity20,BBhattacharya21,wolf2008assessing,jeknic2023invertibility,PhysRevA.110.042624}. The defining feature of such dynamics is the presence of memory effects, manifested as backflow of information from the environment to the system.
 
Over the past decade, non-Markovianity has emerged as an operational resource, enabling advantages in perfect quantum teleportation with mixed states \cite{task1}, enhance the information-carrying capacity of quantum channels \cite{task2}, improving the security and
communication range of device-independent quantum secure direct
communication \cite{Roy2024QSDC}, improve entanglement distribution through noisy optical fibres \cite{task3}, increase the performance of quantum thermal machines \cite{task4}, facilitate more efficient quantum control \cite{task5}, improve metrological precision \cite{altherr2021quantum}, accelerate quantum evolution \cite{DeffnerQSL2013}, and reduce the overhead required for quantum error mitigation \cite{hakoshima2021relationship}. These advantages naturally raise a fundamental question: \emph{how can one faithfully quantify the amount of non-Markovianity generated during a quantum evolution?} Recently, considerable effort has been devoted to addressing this question, leading to several complementary approaches for characterizing memory effects in open quantum dynamics \cite{blp1,RHP,sabrina,richter2024phase,chruscinski2011measures,PhysRevA.106.042212,PhysRevA.83.062115,PhysRevA.86.044101,das2018fundamental,PhysRevA.96.062118,PhysRevA.97.032130,PhysRevA.101.020303}. In particular, recently a resource-theoretic formulation of non-Markovianity has been developed in which Markovian processes are regarded as free processes, while non-Markovian dynamics are treated as the corresponding resource. Within this framework, a geometric distance from the set of Markovian processes provides a bona fide measure of non-Markovianity \cite{Bhattacharya20}.

Alongside such distance-based quantification, two complementary paradigms have dominated the characterization of non-Markovian dynamics. The Rivas--Huelga--Plenio (RHP) approach identifies Markovianity with complete-positive (CP) divisibility of the dynamical map and quantifies memory through the failure of positivity of the intermediate Choi operator \cite{RHP,RHPreview}. In contrast, the Breuer--Laine--Piilo (BLP) approach associates non-Markovianity with revivals of the trace distance between evolving quantum states, thereby interpreting memory as information backflow \cite{BLP,blp1}. Since non-CP-divisibility is a stronger criterion than information backflow \cite{RHPreview,PhysRevA.92.042108,PhysRevA.99.042105}, we adopt the RHP framework throughout this work.

Despite the conceptual significance of these existing quantifiers, evaluating these measures is computationally and experimentally demanding. The RHP measure requires evaluating the trace norm of the intermediate Choi operator, which in turn demands complete knowledge of its spectrum \cite{RHPreview}. Likewise, the distance measure of non-Markovainity involves an optimization over the entire convex set of Markovian Choi states \cite{Bhattacharya20}. Therefore, in practice, both quantities are typically obtained through complete Choi state reconstruction, which can be experimentally and computationally demanding, particularly as the system size increases \cite{PhysRevLett.109.120403}. In this work, we overcome this limitation by developing a witness-based framework that quantifies non-Markovianity directly from witness expectation values, without reconstructing the complete process. We first establish that every non-Markovianity witness, detecting a given evolution provides a lower bound on the geometric distance measure of non-Markovianity. We then characterize precisely when this bound becomes tight, showing that saturation occurs if and only if the witness possesses an eigenspace structure matching the positive and negative spectral subspaces of the difference between the non-Markovian Choi operator and its closest Markovian approximation. Although this characterization identifies the optimal witnesses, such witnesses may contain both positive and negative eigenvalues, which makes their direct experimental implementation challenging.

This motivates our discussion on the class of positive semidefinite non-Markovianity witnesses. After suitable normalization, they constitute elements of a positive operator-valued measure (POVM), so that their expectation values correspond directly to experimentally observable probabilities. We prove that every positive non-Markovianity witness yields a certified lower bound on the RHP measure. Consequently, our results establish a complete quantitative hierarchy connecting experimentally measurable witness expectations, the RHP measure, and the geometric distance measure of non-Markovianity. To demonstrate the practical utility of our framework, we then apply our approach to two different paradigmatic qubit dynamical models: amplitude damping and general Pauli dynamics. For all these models, the optimal positive witnesses reduce to simple Bell-parity projectors. We then show that for amplitude damping, a single parity witness exactly reproduces the RHP measure, while for general Pauli dynamics the complete measure is recovered from only three parity witnesses. Moreover, we propose an explicit experimental scheme for directly
reconstructing the witness values from a small set of two-qubit
Pauli-parity correlations measured along the physical evolution,
thereby avoiding any implementation or reconstruction of the
intermediate Choi operator. The required parity measurements are
compatible with established experimental platforms, including
superconducting circuits~\cite{Riste2013,Steffen2012},
polarization-encoded photonic systems~\cite{James2001,Barbieri2003},
and trapped-ion architectures~\cite{Leibfried2003,Bruzewicz2019}. Consequently, our approach reduces the experimental overhead from complete process characterization to only one or three correlation measurements, together with their time derivatives for these canonical models, thus providing an efficient, scalable, and operationally accessible framework for quantifying non-Markovianity.

The paper is organized as follows. In Sec. \ref{s2}, we introduce the necessary preliminaries, including the dynamics of open quantum systems, witness-based detection of non-Markovianity, and the RHP and distance-based measures. In Sec. \ref{s3}, we develop our witness-based framework for the quantitative characterization of non-Markovian dynamics and establish the hierarchy connecting witness-based quantification with the RHP and geometric measures. We then apply our framework to amplitude-damping, and general Pauli dynamics, showing that optimal positive witnesses can be realized as simple Bell-parity projectors in Sec. \ref{Sec:Parity_Section}. We further demonstrate how the corresponding witness values can be reconstructed from experimentally accessible two-qubit correlation measurements. Finally, Sec. \ref{Conclusion} summarizes our main results and discusses their operational significance.

\section{Preliminaries}\label{s2}
We begin by providing a concise background of the basic concepts and mathematical tools used throughout this work. We adopt the standard formalism and terminology of quantum information theory unless stated otherwise.
\begin{table*}[t]
	\centering
	\caption{Witness-related notation used throughout the manuscript.}
	\label{tab:witness_notation}
	
	\renewcommand{\arraystretch}{1.3}
	\begin{tabular}{ll}
		\toprule
		\textbf{Notation} & \textbf{Meaning} \\
		\midrule
		
		$\mathcal W$
		&
		Generic Hermitian non-Markovianity witness.
		\\
		
		$\mathfrak W_{\mathrm H}$
		&
		Set of Hermitian non-Markovianity witnesses.
		\\
		
		$\mathcal W^{+}$
		&
		Positive-semidefinite non-Markovianity witness.
		\\
		
		$\mathfrak W^{+}$
		&
		Set of positive-semidefinite non-Markovianity witnesses.
		\\
		
		$\mathcal W_c$
		&
		Rescaled witness,
		$\displaystyle
		\mathcal W_c=
		\frac{2\mathcal W}{\lambda_{+}-\lambda_{-}}$.
		\\
		
		$\mathcal W'$
		&
		Shifted normalized operator,
		$\displaystyle
		\mathcal W'=\mathcal W_c-\alpha\mathbb I$,
		$\|\mathcal W'\|_{\infty}=1$,
		$\displaystyle
		\alpha=
		\frac{\lambda_{+}+\lambda_{-}}
		{\lambda_{+}-\lambda_{-}}$.
		\\
		
		$\widetilde w_c$
		&
		Witness rate associated with the rescaled witness $\mathcal W_c$.
		\\
		
		$\mathcal W_c^{+}$
		&
		Rescaled positive witness,
		$\displaystyle
		\mathcal W_c^{+}
		=
		\frac{2\mathcal W^{+}}
		{\lambda_{+}-\lambda_{-}}$.
		\\
		
		$\widetilde w_c^{+}$
		&
		Witness rate associated with the positive witness $\mathcal W_c^{+}$.
		\\
		
		$\mathcal W^{+}(\epsilon)$
		&
		$\epsilon$-adapted positive witness used in the RHP-rate optimization.
		\\
		
		$W_{c,i}$
		&
		Pauli-parity witness,
		$\displaystyle
		W_{c,i}=\mathbb I-\sigma_i\otimes\sigma_i$,
		$i\in\{x,y,z\}$.
		\\
		
		$\widetilde w_i$
		&
		Scalar witness rate associated with $W_{c,i}$,
		$i\in\{x,y,z\}$.
		\\
		
		$W_{\mathrm{Amp}}^{\mathrm{opt}}$
		&
		Optimal positive witness for the amplitude-damping model.
		\\
		
		$W_{\mathrm{Th}}^{\mathrm{opt}}$
		&
		Optimal positive witness for the thermal model.
		\\
		
		$W_{\mathrm{Pauli}}^{\mathrm{opt}}$
		&
		Optimal positive witness for the general Pauli model.
		\\
		
		\bottomrule
	\end{tabular}
\end{table*}
\subsection{Dynamics of Open Quantum System}
An isolated quantum system evolves unitarily. In contrast, the evolution of an open quantum system is described by a completely positive trace-preserving (CPTP) dynamical map $\Lambda (t,t_0)$,  which transforms an initial density operator $\rho (t_0) \in \mathcal{D}(\mathcal{H})$ onto the evolved state $\rho (t) \in \mathcal{D}(\mathcal{H})$ according to
\begin{equation*}
    \Lambda (t,t_0) : \rho (t_0) \mapsto \rho (t).
\end{equation*}
$\mathcal{D}(\mathcal{H})$ represent the set of density operators acting on the Hilbert space $\mathcal{H}$. Throughout this work, we assume that the inverse map $\Lambda^{-1} (t,t_0)$ exists for all times between $t_0$ and $t (t > t_0)$. Consequently, for any intermediate time $t \geq \tau \geq t_0 $, the evolution admits the decomposition
\begin{equation}
\Lambda (t,t_0)= \Lambda (t,\tau) \circ \Lambda (\tau,t_0).  \label{divisibility}
\end{equation}
where $\circ$ denotes the composition of the two dynamical maps.

Although both $\Lambda (t,t_0)$ and $\Lambda (\tau,t_0)$ are CPTP, the intermediate map $\Lambda (t,\tau)$ is not necessarily completely positive. This distinction gives rise to the notion of 
divisibility. A dynamical map is said to be \textit{divisible} if the above composition law given by Eq.~\eqref{divisibility} holds for every $t \geq \tau \geq t_0 $. It is called \textit{positive divisible} (P-divisible) when the intermediate map $\Lambda (t,\tau)$ is positive for all such times, and satisfies the divisible property. The dynamics is \textit{completely positive divisible} (CP-divisible) when the divisible property is satisfied, and $\Lambda (t,\tau)$ is itself CPTP for every $t \geq \tau \geq t_0$. 

The mathematical characterization of a dynamical map in terms of divisibility provides a natural framework for defining quantum Markovianity. Intuitively, a memoryless evolution can be decomposed into a sequence of physically valid intermediate maps satisfying the composition law. Within this framework, Rivas, Huelga, and Plenio (RHP) defined a quantum dynamics to be Markovian if it is completely positive divisible (CP-divisible), and non-Markovian otherwise \cite{RHP}. Accordingly, the breakdown of CP divisibility serves as the hallmark of non-Markovian memory effects. An alternative characterization was proposed by Breuer, Laine, and Piilo (BLP), based on the distinguishability of quantum states under dynamical evolution \cite{BLP}. Owing to the interaction with the environment, the distinguishability between two quantum states generally decreases monotonically with time. However, any temporary increase in distinguishability signifies a backflow of information from the environment to the system, thereby indicating non-Markovian behavior. The former notion is commonly referred to as RHP-type non-Markovianity, whereas the latter is known as BLP-type non-Markovianity. It is well known that every CP-divisible dynamics is also Markovian in the BLP sense, whereas the converse does not hold in general \cite{Hall2014}. Consequently, information backflow constitutes a weaker criterion than CP divisibility. Throughout this work, we adopt CP divisibility as the defining criterion of quantum Markovianity, and any deviation from CP divisibility is regarded as non-Markovian. Also, we restrict our attention to quantum dynamical maps generated by Lindblad-type master equations. The evolution of the system density operator $\rho$ is governed by
\begin{equation}
    \frac{d\rho}{dt} = -\frac{i}{\hbar} [H, \rho] + \sum_{k} \gamma_k (L_k \rho {L_k}^{\dagger} - \frac{1}{2}({L_k}^{\dagger}  L_k \rho + \rho {L_k}^{\dagger}  L_k ))
\end{equation}
where $H$ is the system Hamiltonian describing the unitary part of the evolution, $\gamma_k$ are the Lindblad decay rates, and $L_k$ denote the Lindblad operators characterizing the dissipative interaction with the environment \cite{lindblad}.

Every quantum dynamical map $\Lambda(t,\tau)$ admits a one-to-one correspondence with a quantum state through the Choi--Jamiołkowski isomorphism \cite{choi,jamil}. The corresponding Choi state, denoted by $\mathcal{C}(t,\tau)$ is defined as \cite{choi}
\begin{equation}
\mathcal{C}(t,\tau)
=
(\mathbb{I}\otimes\Lambda(t,\tau))
\ket{\phi}\bra{\phi},
\label{choi}
\end{equation}
where $\ket{\phi}$ is a maximally entangled bipartite state in a $d\times d$-dimensional Hilbert space. The Choi--Jamiołkowski isomorphism establishes that a dynamical map $\Lambda(t,\tau)$ is completely positive if and only if its corresponding Choi state $\mathcal{C}(t,\tau)$ is positive semidefinite. Consequently, the complete positivity of a quantum channel can be verified entirely through the positivity of its associated Choi state.
\subsection{Witness-based detection of non-Markovianity}
In this section, we review the detection of non-Markovian dynamics using witness operators. 

Let $\mathbb{F}_{\mathrm{M}}^{\epsilon}$ denote the set of Choi operators corresponding to all Markovian dynamics. It has been shown in \cite{Bhattacharya20} that $\mathbb{F}_{\mathrm{M}}^{\epsilon}$ is convex and compact. As a consequence, the Hahn-Banach hyperplane separation theorem \cite{holmes2012geometric} proposes the existence of a hermitian operator (popularly known as the \textit{witness} operator) that separates all Markovian dynamics $\in \mathbb{F}_{\mathrm{M}}^{\epsilon}$ from at least one non-Markovian dynamics through a hyperplane. Mathematically, if $\mathcal{W}$ represents a non-Markovianity witness, then 
\begin{equation}
\begin{split}
    & \mathrm{Tr}
\big(
\mathcal{W}\mathcal{C}^{\mathrm{M}}
\big)
\geq 0,
\qquad
\forall~
\mathcal{C}^{\mathrm{M}}
\in
\mathbb{F}_{\mathrm{M}}^{\epsilon} \hspace{0.2cm} \text{and} \\ &
\mathrm{Tr}
\big(
\mathcal{W}\mathcal{C}^{\mathrm{N}}
\big)
< 0 \hspace{0.2cm}\text{for at least one} \hspace{0.2cm}\mathcal{C}^{\mathrm{N}} \notin \mathbb{F}_{\mathrm{M}}^{\epsilon}.
\end{split} \label{witness}
\end{equation}
Henceforth, the set of all Hermitian witnesses satisfying Eq.~\eqref{witness} is represented by $\{\mathfrak{W}_{\mathrm H}\}$, i.e., 
\begin{equation}
    \mathfrak{W}_{\mathrm H}=
    \left\{
    \mathcal{W}: \mathcal{W}=\mathcal{W}^{\dagger} \text{and} 
    \operatorname{Tr}(\mathcal{C}^\mathrm{M}\mathcal{W})\geq0,\ 
    \forall\,\mathcal{C}^\mathrm{M}\in \mathbb{F}_{\mathrm{M}}^{\epsilon}
    \right\}.
\end{equation}
The Choi operators corresponding to a Markovian and non-Markovian dynamics are represented by $\mathcal{C}^{\mathrm{M}}$ and $\mathcal{C}^{\mathrm{N}}$, respectively. Thus, a negative expectation value of the witness operator is sufficient to conclude that the underlying dynamics is non-Markovian. It is worth noting that witness-based detection of memory effects has been demonstrated in several realistic physical models \cite{giarmatzi2021witnessing,PhysRevA.102.022221}, highlighting the practical applicability and operational significance of this approach.

Beyond mere detection, a question of fundamental interest is to quantify the amount of non-Markovianity present in a given dynamics. In this context, we discuss the Rivas--Huelga--Plenio (RHP) and distance-based measures of non-Markovian dynamics in the following section.  
\subsection{RHP and Distance-based measures of non-Markovianity}
The RHP measure quantifies the extent to which the Choi operator associated with a non-Markovian dynamics deviates from positivity. It is formally defined as \cite{RHP}
\begin{equation}
    \mathcal{I}_{\text{RHP}}= \int_{0}^{\infty} g(t) \,dt, 
\end{equation}
where 
\begin{equation}
    g(t)
    =
    \lim_{\epsilon\to0^+}
    \frac{
        \|\mathcal{C}(t+\epsilon,t)\|_1-1
    }{\epsilon}.   \label{RHP}
\end{equation}
For a Markovian dynamics, $\mathcal{C}(t+\epsilon,t) \in \mathbb{F}_{\mathrm{M}}^{\epsilon}$ and hence, $g(t)=\mathcal{I}_{\text{RHP}}=0$ since $\|\mathcal{C}(t+\epsilon,t)\|_1=1$. On the other hand, for $\mathcal{C}(t+\epsilon,t) \notin \mathbb{F}_{\mathrm{M}}^{\epsilon}$, $\mathcal{I}_{\text{RHP}}>0$ and this is a signature of non-Markovianity. While the RHP measure provides a mathematically rigorous characterization, a more intuitive interpretation can be obtained through a distance-based measure.

Distance-based measures in resource theories quantify the minimum geometric distance of the resourceful state from the set of free states. A resource theory of non-Markovianity has been formulated in \cite{Bhattacharya20} where the Choi state corresponding to Markovian dynamics are free states, and the divisible operations are free. Non-Markovian dynamics are the resources in this theory. The set of divisible operations does not form a convex set, in general. To overcome this, the small-time approximation is taken into account to ensure the convexity of free operations. For the resource theory of non-Markovianity, the distance-based measure is defined as
\begin{equation}
    D_T(t)
=
\lim_{\epsilon\to0^+}
\inf_{\mathcal{C}^{\mathrm{M}} \in \mathbb{F}_{\mathrm{M}}^{\epsilon}}
\frac{
\left\|
\mathcal{C}^{\mathrm{N}}(t+\epsilon,t)
-
\mathcal{C}^{\mathrm{M}}(t+\epsilon,t)
\right\|_1
}{\epsilon}, \label{distance}
\end{equation}
where the trace norm of an operator $A$ is given by $\|A\|_1=\operatorname{Tr}\sqrt{A^{\dagger}A}$.  $\mathcal{C}^{\mathrm{M}} \in \mathbb{F}_{\mathrm{M}}^{\epsilon}$ and the small time approximation is ensured by the limiting condition $\epsilon\to0^+$. This measure is faithful in the sense that $D_T(t) \ge 0$ with the equality satisfied if and only if the dynamics is Markovian. In addition, the measure also obeys convexity and monotonicity under divisible free operations. So, $D_T(t)$ represents a \textit{bona fide} measure of non-Markovianity \cite{Bhattacharya20}.
\begin{observation} \cite{Bhattacharya20}
    The distance-based measure is related to the RHP measure as follows:
    \begin{equation*}
        D_{T}(t) \ge g(t), 
    \end{equation*}
    i.e., the RHP measure is a lower bound to the distance-based measure. \label{observation1}
\end{observation}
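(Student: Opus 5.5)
The plan is to compare the two quantities at fixed $t$ and fixed small $\epsilon>0$, and then pass to the limit. Write $\mathcal{C}^{\mathrm{N}}\equiv\mathcal{C}(t+\epsilon,t)$ for the intermediate Choi operator of the given dynamics. Because the intermediate map is trace preserving, this operator is Hermitian with unit trace. Let $\mathcal{C}^{\mathrm{M}}\in\mathbb{F}_{\mathrm{M}}^{\epsilon}$ be any Markovian Choi operator; it is positive semidefinite with unit trace, so $\|\mathcal{C}^{\mathrm{M}}\|_1=1$. The key inequality is then just the reverse triangle inequality for the trace norm:
\begin{equation*}
\|\mathcal{C}^{\mathrm{N}}\|_1-1=\|\mathcal{C}^{\mathrm{N}}\|_1-\|\mathcal{C}^{\mathrm{M}}\|_1\le \|\mathcal{C}^{\mathrm{N}}-\mathcal{C}^{\mathrm{M}}\|_1 .
\end{equation*}

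Since this bound holds for every free $\mathcal{C}^{\mathrm{M}}$, we may take the infimum over $\mathbb{F}_{\mathrm{M}}^{\epsilon}$ on the right-hand side. Dividing by $\epsilon>0$ then gives, for each fixed $\epsilon$,
\begin{equation*}
\frac{\|\mathcal{C}(t+\epsilon,t)\|_1-1}{\epsilon}\le \inf_{\mathcal{C}^{\mathrm{M}}\in\mathbb{F}_{\mathrm{M}}^{\epsilon}}\frac{\|\mathcal{C}^{\mathrm{N}}(t+\epsilon,t)-\mathcal{C}^{\mathrm{M}}(t+\epsilon,t)\|_1}{\epsilon}.
\end{equation*}
Taking $\epsilon\to0^+$ on both sides and using that limits preserve non-strict inequalities gives $g(t)\le D_T(t)$ by the definitions \eqref{RHP} and \eqref{distance}. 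Integrating over $t$ also yields $\mathcal{I}_{\mathrm{RHP}}\le\int D_T(t)\,dt$.

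The main obstacle is not the inequality itself but the justification of the limits. For Lindblad-type generators the map $\Lambda(t+\epsilon,t)$ is differentiable in $\epsilon$, so the limit defining $g(t)$ exists. For $D_T$, if existence of the limit is not guaranteed, I would state the result with $\liminf$ on the right-hand side; the pointwise inequality at each $\epsilon$ already suffices for that version.

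A second point to verify is that every element of $\mathbb{F}_{\mathrm{M}}^{\epsilon}$ is a normalized positive operator. This follows because these are Choi states of CPTP maps, but the normalization must match the convention in \eqref{choi}. I would check that $\ket{\phi}$ is taken normalized, so that $\operatorname{Tr}\,\mathcal{C}=1$ for both operators; otherwise the same argument works after replacing $1$ by the common trace.
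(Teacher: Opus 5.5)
Your argument is correct. This paper does not prove the Observation itself; it imports it from \cite{Bhattacharya20}. The closest in-paper argument is the final chain $g(t)\le\sup_{\mathcal W\in\mathfrak W_{\mathrm H}}\widetilde w_c\le D_T(t)$ of Theorem~\ref{thm:optimal}, which reaches the same inequality by a different mechanism. That mechanism is the trace-norm duality of Eq.~\eqref{tracenormidentity}, applied with the $\epsilon$-adapted positive witness $P_-(\epsilon)$, i.e.\ with $X=\mathbb I-2P_-(\epsilon)$. It gives, for every $\mathcal C^{\mathrm M}\in\mathbb F_{\mathrm M}^{\epsilon}$,
\[
\|\mathcal C^{\mathrm N}-\mathcal C^{\mathrm M}\|_1\ \ge\ 2\operatorname{Tr}(\mathcal C_\epsilon^-)+2\operatorname{Tr}\big(\mathcal C^{\mathrm M}P_-(\epsilon)\big)\ \ge\ \|\mathcal C^{\mathrm N}\|_1-1 .
\]
Your reverse triangle inequality gets the same bound, pointwise in $\epsilon$, in one line, using only $\|\mathcal C^{\mathrm M}\|_1=1$. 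It avoids three things the paper relies on:
\begin{itemize}
\item the Jordan decomposition of $\mathcal C(t+\epsilon,t)$;
\item attainment of the infimum, which the proof of Theorem~\ref{theorem1} assumes through $\mathcal C^{\mathrm M}_{\mathrm{opt}}$;
\item the passage from an $\epsilon$-adapted family $P_-(\epsilon)$ to a supremum over fixed Hermitian witnesses, which the paper's chain leaves implicit.
\end{itemize}
In exchange, the duality route yields an explicit certificate: the measurement $P_-(\epsilon)$, whose rescaled expectation value lies between $g(t)$ and $D_T(t)$. It also isolates the slack term $2\operatorname{Tr}(\mathcal C^{\mathrm M}P_-(\epsilon))$, which is exactly the ``vanishing Markovian contribution'' of condition~(i) in Proposition~\ref{prop:tight}. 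So it tells you when the bound can be tight, which your argument does not. Your remarks on using $\liminf$ for $D_T$ and on normalizing $\ket{\phi}$ are appropriate. One small refinement: the limit defining $g(t)$ exists not because $\epsilon\mapsto\|\mathcal C(t+\epsilon,t)\|_1$ is differentiable (it need not be). It exists because the trace norm is convex and Lipschitz, so it has one-sided directional derivatives along the differentiable curve $\epsilon\mapsto\mathcal C(t+\epsilon,t)$.
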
 

While distance-based measures are particularly appealing owing to their clear geometric interpretation, their evaluation requires a minimization over the set of free states, which is difficult to compute exactly. To overcome this difficulty, we derive a lower bound on the distance measure in the following section.

\section{Quantification of non-Markovian dynamics} \label{s3}
Having introduced witness-based detection and distance-based quantification of non-Markovianity, we now unify these two frameworks by deriving a witness-based lower bound on the distance measure.
\subsection{Witness-based lower bound of non-Markovianity}
In this section, we establish the central result of our work through the following theorem, which relates the expectation value of a suitably chosen normalized non-Markovianity witness to the geometric measure of non-Markovianity. However, before proceeding to the main result, we formally define the normalized witness below.
\begin{definition}\label{def1}
Let $\mathcal W$ be a non-Markovianity witness satisfying
\begin{align*}
\operatorname{Tr}(\mathcal{C}^{\mathrm{M}}\mathcal W)\ge0,
\qquad
\forall\, \mathcal{C}^{\mathrm{M}}\in \mathbb{F}_{\mathrm{M}}^{\epsilon},  
\end{align*}

where $\mathbb{F}_{\mathrm{M}}^{\epsilon}$ denotes the set of Markovian Choi operators.
Let $\lambda_+$ and $\lambda_-$ denote respectively the largest and
smallest eigenvalues of $\mathcal W$, with $\lambda_+>\lambda_-$.
The associated normalized witness operator is defined as
\begin{equation}
\mathcal W'
=
\frac{
2\mathcal W-(\lambda_++\lambda_-)\mathbb I
}{
\lambda_+-\lambda_-
}
=
\mathcal W_c-\alpha\mathbb I,   \label{wprime}
\end{equation}  
where
\begin{align*}
  \mathcal W_c
:=
\frac{2\mathcal W}{\lambda_+-\lambda_-},
\qquad
\alpha
:=
\frac{\lambda_++\lambda_-}{\lambda_+-\lambda_-}.  
\end{align*}

By construction,
\begin{align*}
  -\mathbb I
\le
\mathcal W'
\le
\mathbb I,  
\end{align*}

or equivalently,
\begin{align*}
   \|\mathcal W'\|_\infty=1. 
\end{align*}

\end{definition}
We now present our main theorem, which establishes a witness-based lower bound on the geometric distance measure of non-Markovianity.
\begin{theorem}\label{theorem1}
For every non-Markovianity witness ($\mathcal{W} \in {\mathfrak{W}_{\mathrm H}}$) detecting a non-Markovian dynamics,
\begin{equation}
D_T(t)\ge \widetilde w_c,
\label{eq:mainbound}
\end{equation}
where
\begin{align} \label{widetildewc1}
  \widetilde w_c
=
-\lim_{\epsilon\to0^+}
\frac{
\operatorname{Tr}
\left[
\mathcal{C}^\mathrm{N}(t+\epsilon,t)\mathcal W_c
\right]
}{\epsilon},  
\end{align}

and $D_T(t)$ is the trace-distance measure defined in
Eq.~\eqref{distance}.
\end{theorem}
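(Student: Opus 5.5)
The plan is to use Hölder's inequality (trace-norm/operator-norm duality) on the difference of Choi operators, together with the witness property on the Markovian set. Fix $t$ and $\epsilon>0$, and let $\mathcal C^{\mathrm M}\in\mathbb F_{\mathrm M}^{\epsilon}$ be arbitrary. Since $\|\mathcal W'\|_\infty=1$ by Definition~\ref{def1}, for any Hermitian $X$ we have $\|X\|_1\ge|\operatorname{Tr}(X\mathcal W')|\ge -\operatorname{Tr}(X\mathcal W')$. We apply this to $X=\mathcal C^{\mathrm N}(t+\epsilon,t)-\mathcal C^{\mathrm M}(t+\epsilon,t)$ and write $\mathcal W'=\mathcal W_c-\alpha\mathbb I$.

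The key point is that the identity shift drops out. Both Choi operators have unit trace, because the intermediate map $\Lambda(t+\epsilon,t)$ is trace preserving (it is a composition of a TP map with the inverse of a TP map), and Markovian Choi states are normalized. Hence $\operatorname{Tr}(X)=0$ and $\operatorname{Tr}(X\mathcal W')=\operatorname{Tr}(X\mathcal W_c)$. This gives
\begin{equation*}
\|\mathcal C^{\mathrm N}-\mathcal C^{\mathrm M}\|_1\ \ge\ -\operatorname{Tr}(\mathcal C^{\mathrm N}\mathcal W_c)+\operatorname{Tr}(\mathcal C^{\mathrm M}\mathcal W_c)\ \ge\ -\operatorname{Tr}(\mathcal C^{\mathrm N}\mathcal W_c).
\end{equation*}
The last step uses $\operatorname{Tr}(\mathcal C^{\mathrm M}\mathcal W_c)=\tfrac{2}{\lambda_+-\lambda_-}\operatorname{Tr}(\mathcal C^{\mathrm M}\mathcal W)\ge0$, which holds because $\mathcal W\in\mathfrak W_{\mathrm H}$ and $\lambda_+>\lambda_-$.

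The right-hand side does not depend on $\mathcal C^{\mathrm M}$. We therefore take the infimum over $\mathbb F_{\mathrm M}^{\epsilon}$, divide by $\epsilon>0$, and let $\epsilon\to0^+$. Monotonicity of limits then gives $D_T(t)\ge\widetilde w_c$ by Eqs.~\eqref{distance} and \eqref{widetildewc1}.

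The main obstacle is the limit. We must either assume that the limit defining $\widetilde w_c$ exists, which follows when the dynamics is differentiable (Lindblad-type generator), or else phrase the bound with $\liminf$/$\limsup$ as appropriate. Differentiability works here because $\operatorname{Tr}(\mathcal C^{\mathrm N}(t+\epsilon,t)\mathcal W_c)$ vanishes at $\epsilon=0$ to the needed order. At $\epsilon=0$ the Choi state is $\ket\phi\bra\phi$, which lies in every Markovian set, so $\operatorname{Tr}(\ket\phi\bra\phi\mathcal W_c)\ge0$. This is only a sign condition, though, and does not force the value to be zero. If the value is strictly positive, the numerator tends to a positive constant, $\widetilde w_c=-\infty$, and the bound holds trivially. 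In every case the inequality between limits holds once the distance limit exists, and we take that existence as given by its definition in Eq.~\eqref{distance}.

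A further subtlety is that $\mathcal W$ is a witness for $\mathbb F_{\mathrm M}^{\epsilon}$ at each $\epsilon$. We take $\mathcal W$ to be $\epsilon$-independent and valid for all small $\epsilon$, so that the inequality holds uniformly along the limit.
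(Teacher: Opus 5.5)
Your proof is correct and follows essentially the same route as the paper. Both use trace-norm/operator-norm duality with $\mathcal W'$, cancel the $\alpha\mathbb I$ shift via $\operatorname{Tr}(\mathcal C^{\mathrm N}-\mathcal C^{\mathrm M})=0$, and then invoke $\operatorname{Tr}(\mathcal C^{\mathrm M}\mathcal W_c)\ge 0$. One difference is a small improvement: you use $\|X\|_1\ge -\operatorname{Tr}(X\mathcal W')$ for every Markovian $\mathcal C^{\mathrm M}$ before taking the infimum, so you avoid the paper's appeal to a minimizer $\mathcal C^{\mathrm M}_{\mathrm{opt}}$ and to the detection condition $\operatorname{Tr}(\mathcal C^{\mathrm N}\mathcal W_c)<0$ that the paper needs to handle the absolute values.
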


\begin{proof}

Consider the trace-norm identity for operators $A$ and $X$ such that $-\mathbb{I}\le X\le \mathbb{I}$.
\begin{align}
    \|A\|_1
=
\max_{-\mathbb{I}\le X\le \mathbb{I}}
\left|
\mathrm{Tr}(AX)
\right|.  \label{tracenormidentity}
\end{align}

By construction, the spectrum of $\mathcal{W}'$ lies in the interval $[-1,1]$, and therefore
\begin{align*}
  -\mathbb{I}
\leq
\mathcal{W}'
\leq
\mathbb{I}.  
\end{align*}

Using the above identity with $X=\mathcal{W}'$ and $A= \mathcal{C}^{\mathrm{N}}(t+\epsilon,t)-\mathcal{C}^{\mathrm{M}}(t+\epsilon,t)$, we obtain,
\begin{align}
\left\|
\mathcal{C}^{\mathrm{N}}(t+\epsilon,t)
-
\mathcal{C}^{\mathrm{M}}(t+\epsilon,t)
\right\|_1
\ge
\Big|
\mathrm{Tr}
\Big[\big(
\mathcal{C}^{\mathrm{N}}(t+\epsilon,t)
-\nonumber
\\
\mathcal{C}^{\mathrm{M}}(t+\epsilon,t)\big)
\mathcal{W}'
\Big]
\Big|.
\end{align}

Substituting $D_T(t)$ from Eq.~\eqref{distance} gives 
\begin{align*}
   D_T(t)
\geq
\lim_{\epsilon\to0^+}
\inf_{\mathcal{C}^{\mathrm{M}}}
\frac{
\left|
\mathrm{Tr}
\left[
\left(
\mathcal{C}^{\mathrm{N}}(t+\epsilon,t)
-
\mathcal{C}^{\mathrm{M}}(t+\epsilon,t)
\right)
\mathcal{W}'
\right]
\right|
}{\epsilon}. 
\end{align*}

Substituting the definition of $\mathcal{W}'$ from Definition \ref{def1} and using the normalization condition $\mathrm{Tr}(\mathcal{C}^\mathrm{M})=\mathrm{Tr}(\mathcal{C}^\mathrm{N})=1$ for a trace-preserving dynamics gives
\begin{align*}
    D_T(t)
\geq
\lim_{\epsilon\to0^+}
\inf_{\mathcal{C}^{\mathrm{M}}}
\frac{
\left|
\mathrm{Tr}
\left[
\left(
\mathcal{C}^{\mathrm{N}}(t+\epsilon,t)
-
\mathcal{C}^{\mathrm{M}}(t+\epsilon,t)
\right)
\mathcal{W}_{c}
\right]
\right|
}{\epsilon}.
\end{align*}

Let $\mathcal{C}^{\mathrm{M}}_{\mathrm{opt}}$ denote the Markovian Choi operator attaining the infimum. Since $\mathcal{W}$ is a non-Markovianity witness,
$$
\mathrm{Tr}
\left(
\mathcal{C}^{\mathrm{M}}_{\mathrm{opt}}(t+\epsilon,t)
\mathcal{W}
\right)
\geq 0
\implies
\mathrm{Tr}
\left(
\mathcal{C}^{\mathrm{M}}_{\mathrm{opt}}(t+\epsilon,t)
\mathcal{W}_{c}
\right)
\geq 0,
$$whereas for the detected non-Markovian Choi operator
$\mathcal{C}^{\mathrm N}(t+\epsilon,t)$,
the witness satisfies
$$
\mathrm{Tr}
\left(
\mathcal{C}^{\mathrm N}(t+\epsilon,t)\mathcal W
\right)
<0
\implies
\mathrm{Tr}
\left(
\mathcal{C}^{\mathrm N}(t+\epsilon,t)\mathcal W_c
\right)
<0.
$$

Consequently,
\begin{equation*}
\begin{split}
&\mathrm{Tr}
\left[
\left(
\mathcal{C}^{\mathrm{N}}(t+\epsilon,t)
-
\mathcal{C}^{\mathrm{M}}_{\mathrm{opt}}(t+\epsilon,t)
\right)
\mathcal{W}_{c}
\right]
\\ &=
\mathrm{Tr}
\left(
\mathcal{C}^{\mathrm{N}}(t+\epsilon,t)
\mathcal{W}_{c}
\right)
-
\mathrm{Tr}
\left(
\mathcal{C}^{\mathrm{M}}_{\mathrm{opt}}(t+\epsilon,t)
\mathcal{W}_{c}
\right)
\\ & \le
\mathrm{Tr}
\left(
\mathcal{C}^{\mathrm{N}}(t+\epsilon,t)
\mathcal{W}_{c}
\right)
< 0.
\end{split}
\end{equation*}
Using the properties of a non-Markovian witness,
\begin{equation*}
\begin{split}
    & \left|
\mathrm{Tr}
\left[
\left(
\mathcal{C}^{\mathrm{N}}(t+\epsilon,t)
-
\mathcal{C}^{\mathrm{M}}_{\mathrm{opt}}(t+\epsilon,t)
\right)
\mathcal{W}_{c}
\right]
\right|
\\ & \ge
\left|
\mathrm{Tr}
\left(
\mathcal{C}^{\mathrm{N}}(t+\epsilon,t)
\mathcal{W}_{c}
\right)
\right|.
\end{split}
\end{equation*}

Hence,
\begin{equation*}
    D_T(t)
\geq
\lim_{\epsilon\to0^+}
\frac{
\left|
\mathrm{Tr}
\left[
\mathcal{C}^{\mathrm{N}}(t+\epsilon,t)
\mathcal{W}_{c}
\right]
\right|
}{\epsilon}.
\end{equation*}

Now, since
\begin{equation*}
\begin{split}
& \mathrm{Tr}
\left(
\mathcal{C}^{\mathrm{N}}(t+\epsilon,t)
\mathcal{W}_{c}
\right)
<0, \\ &
D_T(t)
\geq
-
\lim_{\epsilon\to0^+}
\frac{
\mathrm{Tr}
\left[
\mathcal{C}^{\mathrm{N}}(t+\epsilon,t)
\mathcal{W}_{c}
\right]
}{\epsilon}
=
\widetilde{w}_{c}.
\end{split}
\end{equation*}

This completes the proof.
\end{proof}

Theorem~\ref {theorem1} establishes that whenever a non-Markovian dynamics is detected by a witness, the corresponding geometric distance measure is necessarily non-zero, thereby certifying the presence of non-Markovianity in the underlying dynamics. Moreover, unlike distance-based measures, whose exact evaluation generally requires a minimization over the set of free states, the proposed witness-based framework provides a computable lower bound that circumvents this optimization. We next characterize the conditions under which this lower bound is saturated, thereby identifying when the witness is optimal.


\begin{proposition}[Tightness criterion]
\label{prop:tight}
Let
\[
\Delta_\epsilon
=
\mathcal{C}^N(t+\epsilon,t)
-
\mathcal{C}_{\rm opt}^{\rm M}(t+\epsilon,t),
\]
where $\mathcal{C}_{\rm opt}^{\rm M}$ denotes the Markovian Choi operator attaining
the infimum in Eq.~\eqref{distance}. Let
\begin{align*}
    \Delta_\epsilon
=
\Delta_\epsilon^+
-
\Delta_\epsilon^-
\end{align*}

be the Jordan decomposition of $\Delta_\epsilon$ into orthogonal
positive operators.
Then
\begin{align*}
    \widetilde w_c=D_T(t)
\end{align*}

if and only if, in the limit $\epsilon\to0^+$, the following
conditions hold:

\begin{enumerate}
\item[\rm (i)]
\emph{(Vanishing Markovian contribution)}
\begin{align*}
\operatorname{Tr}
\left[
\mathcal{C}_{\rm opt}^{\rm M}(t+\epsilon,t)\mathcal W_c
\right]
=
0.  
\end{align*}

\item[\rm (ii)]
\emph{(Sign matching)}
$$\mathcal W'
=
-\operatorname{sign}(\Delta_\epsilon)
\qquad
\text{on }
\operatorname{supp}(\Delta_\epsilon).$$

\end{enumerate}
\end{proposition}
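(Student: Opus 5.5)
The plan is to revisit the chain of inequalities in the proof of Theorem~\ref{theorem1}, now evaluated at the optimal Markovian Choi operator $\mathcal{C}^{\rm M}_{\rm opt}$, and to show that $\widetilde w_c = D_T(t)$ holds exactly when every link of that chain is an equality in the limit $\epsilon\to0^+$. The chain has two genuinely lossy steps, and these will be matched with conditions (i) and (ii). Concretely, we write
\begin{align*}
\epsilon\, D_T(t) + o(\epsilon) = \|\Delta_\epsilon\|_1
\;\ge\; \big|\operatorname{Tr}(\Delta_\epsilon \mathcal W')\big|
= \big|\operatorname{Tr}(\Delta_\epsilon \mathcal W_c)\big|
\;\ge\; -\operatorname{Tr}\big[\mathcal{C}^{\rm N}(t+\epsilon,t)\mathcal W_c\big].
\end{align*}
The middle equality is exact for every $\epsilon$, because $\operatorname{Tr}\Delta_\epsilon=0$ by trace preservation, so the shift $\alpha\mathbb I$ contributes nothing.

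For the first inequality, I will use the Jordan decomposition $\Delta_\epsilon=\Delta_\epsilon^+-\Delta_\epsilon^-$ with orthogonal supports. Since $-\mathbb I\le\mathcal W'\le\mathbb I$, we have $\operatorname{Tr}(\Delta_\epsilon\mathcal W')\ge -\operatorname{Tr}\Delta_\epsilon^+ -\operatorname{Tr}\Delta_\epsilon^-=-\|\Delta_\epsilon\|_1$. Equality holds iff $\operatorname{Tr}[\Delta_\epsilon^+(\mathbb I+\mathcal W')]=0$ and $\operatorname{Tr}[\Delta_\epsilon^-(\mathbb I-\mathcal W')]=0$. Each trace is of a product of positive operators, so it vanishes iff $\mathcal W'=-\mathbb I$ on $\operatorname{supp}\Delta_\epsilon^+$ and $\mathcal W'=+\mathbb I$ on $\operatorname{supp}\Delta_\epsilon^-$, i.e.\ $\mathcal W'=-\operatorname{sign}(\Delta_\epsilon)$ on $\operatorname{supp}(\Delta_\epsilon)$. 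The negative sign (rather than $\pm$) is forced because the proof of Theorem~\ref{theorem1} already gives $\operatorname{Tr}(\Delta_\epsilon\mathcal W_c)<0$. This yields (ii).

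For the second inequality, the identity $\operatorname{Tr}(\Delta_\epsilon\mathcal W_c)=\operatorname{Tr}(\mathcal C^{\rm N}\mathcal W_c)-\operatorname{Tr}(\mathcal C^{\rm M}_{\rm opt}\mathcal W_c)$, together with the signs $\operatorname{Tr}(\mathcal C^{\rm N}\mathcal W_c)<0$ and $\operatorname{Tr}(\mathcal C^{\rm M}_{\rm opt}\mathcal W_c)\ge0$, shows that the gap is exactly $\operatorname{Tr}(\mathcal C^{\rm M}_{\rm opt}\mathcal W_c)$. Hence equality holds iff (i) holds. For sufficiency I will combine the two equalities: they give $\|\Delta_\epsilon\|_1=-\operatorname{Tr}[\mathcal C^{\rm N}\mathcal W_c]$, and dividing by $\epsilon$ yields $D_T(t)=\widetilde w_c$. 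For necessity, any strict slack in either step makes $\widetilde w_c< D_T(t)$.

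The main obstacle is the limit. Because $D_T$ and $\widetilde w_c$ are rates, all quantities are $O(\epsilon)$, so the tightness conditions should properly be read at first order: $\operatorname{Tr}(\mathcal C^{\rm M}_{\rm opt}\mathcal W_c)=o(\epsilon)$, and the sign matching should hold up to $o(\epsilon)$ corrections in $\operatorname{Tr}[\Delta_\epsilon^\pm(\mathbb I\pm\mathcal W')]$. In the necessity direction I would therefore interpret ``in the limit $\epsilon\to0^+$'' as equality of the $\epsilon$-normalized quantities. This suggests splitting the total slack $\epsilon^{-1}\big(\|\Delta_\epsilon\|_1+\operatorname{Tr}[\mathcal C^{\rm N}\mathcal W_c]\big)$ into the two nonnegative terms above. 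Each term must then vanish separately in the limit, and this gives (i) and (ii) in their limiting form. Existence of $\mathcal C^{\rm M}_{\rm opt}$ follows from compactness of $\mathbb F_{\rm M}^{\epsilon}$.
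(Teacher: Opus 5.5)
Your proposal is correct and follows essentially the same route as the paper. The paper writes $\widetilde w_c=-A-B$ with $A=\lim_{\epsilon\to0^+}\operatorname{Tr}[\mathcal C^{\rm M}_{\rm opt}\mathcal W_c]/\epsilon\ge0$ and $B=\lim_{\epsilon\to0^+}\operatorname{Tr}[\Delta_\epsilon\mathcal W']/\epsilon\ge -D_T(t)$, which are exactly your two nonnegative slack terms, so equality forces $A=0$ (condition (i)) and $B=-D_T(t)$ (condition (ii)). Your explicit equality analysis via $\operatorname{Tr}[\Delta_\epsilon^+(\mathbb I+\mathcal W')]$ and $\operatorname{Tr}[\Delta_\epsilon^-(\mathbb I-\mathcal W')]$, together with your $o(\epsilon)$ reading of the limit, is slightly more careful than the paper, which goes from a limiting equality straight to exact sign matching and implicitly assumes that $A$ and $B$ exist separately.
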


\begin{proof}
Let
$$\Delta_\epsilon
=
\mathcal{C}^{\rm N}(t+\epsilon,t)-\mathcal{C}_{\rm opt}^{\rm M}(t+\epsilon,t).$$

$$ \implies \mathcal{C}^{\rm N}(t+\epsilon,t)
=
\mathcal{C}_{\rm opt}^{\rm M}(t+\epsilon,t)+\Delta_\epsilon.$$

Therefore, 
\begin{align*}
   \widetilde w_c
=
-
\lim_{\epsilon\to0^+}
\frac{
\operatorname{Tr}
\left[
\mathcal{C}_{\rm opt}^{\rm M}(t+\epsilon,t)\mathcal W_c
\right]
+
\operatorname{Tr}
\left[
\Delta_\epsilon\mathcal W_c
\right]
}{\epsilon}. 
\end{align*}

\vspace{0.5cm}

Let
\[
\widetilde w_c
=
-A-B,
\]
where
\[
A
:=
\lim_{\epsilon\to0^+}
\frac{
\operatorname{Tr}
\left[
\mathcal{C}_{\rm opt}^{\rm M}(t+\epsilon,t)\mathcal W_c
\right]
}{\epsilon},
\]
and
\begin{equation*}
B
:=
\lim_{\epsilon\to0^+}
\frac{
\operatorname{Tr}
\left[
\Delta_\epsilon\mathcal W_c
\right]
}{\epsilon}
\end{equation*}
Using Eq.~\eqref{wprime} and the traceless property of $\Delta_\epsilon$ for a trace-preserving dynamics, $B$ can be equivalently written as 
\begin{equation}
    B:=
\lim_{\epsilon\to0^+}
\frac{
\operatorname{Tr}
\left[
\Delta_\epsilon\mathcal W'
\right]
}{\epsilon}.  \label{Bvalue}
\end{equation}

Since \(\mathcal W\) is a non-Markovianity witness, $\operatorname{Tr}
(\mathcal{C}_{\rm opt}^{\rm M}\mathcal W)\ge0$. For the sake of simplicity, we use the notation $\mathcal{C}_{\rm opt}^{\rm M}$ instead of $\mathcal{C}_{\rm opt}^M(t+\epsilon,t)$ henceforth.

Since
\begin{align*}
 \mathcal W_c
=
\frac{2}{\lambda_+-\lambda_-}\mathcal W,    
\end{align*}

it follows that
\begin{align*}
\operatorname{Tr}
(\mathcal{C}_{\rm opt}^{\rm M}\mathcal W_c)\ge0,
\end{align*}
and therefore $A\ge0$. Furthermore, Definition~\ref{def1} guarantees
$\|\mathcal W'\|_\infty=1.$

From Eq.~\eqref{tracenormidentity}, we have
\begin{align*}
\operatorname{Tr}(\Delta_\epsilon \mathcal{W}')
&\ge
-\|\Delta_\epsilon\|_1.
\end{align*}
Therefore,
\begin{align*}
B
&\ge
-D_T(t).
\end{align*}

Assume that conditions (i) and (ii) hold.
Condition (i) gives $A=0$. Using condition (ii), 
\begin{align*}
    \mathcal W'
=
-\operatorname{sign}(\Delta_\epsilon)
\qquad
\text{on }
\operatorname{supp}(\Delta_\epsilon),
\end{align*}
or equivalently,
\begin{align*}
   \mathcal W'
=
-\mathbb I
\quad
\text{on }
\operatorname{supp}(\Delta_\epsilon^+),
\qquad
\mathcal W'
=
+\mathbb I
\quad
\text{on }
\operatorname{supp}(\Delta_\epsilon^-). 
\end{align*}
The maximum of Eq.~\eqref{tracenormidentity} is attained for these extreme values of $\mathcal W'$. So for $X=\mathcal W'$,
$\|\Delta_\epsilon\|_1=|\operatorname{Tr}
(\Delta_\epsilon\mathcal W')|$. Now,
$$\operatorname{Tr}(\Delta_\epsilon\mathcal W') \le  \operatorname{Tr}[\mathcal{C}^{\mathrm{N}}(t+\epsilon,t)\mathcal W'] < 0.$$
$$ \implies |\operatorname{Tr}(\Delta_\epsilon\mathcal W')| = - \operatorname{Tr}(\Delta_\epsilon\mathcal W').$$ Hence, 
\begin{align*}
    -\|\Delta_\epsilon\|_1=-|\operatorname{Tr}
(\Delta_\epsilon\mathcal W')|=\operatorname{Tr}
(\Delta_\epsilon\mathcal W').
\end{align*}

$\implies -D_T(t)=B$, and
\begin{align*}
 \widetilde w_c
=
-A-B
=-B=
D_T(t).   
\end{align*}

Now, we prove the converse statement. Precisely, if $\widetilde w_c=D_T(t)$, then conditions (i) and (ii) hold. Let
$$\widetilde w_c = -A-B.$$ 
Since
\begin{align*}
    A\ge0 \qquad  \text{and}
\qquad 
B\ge-D_T(t),
\end{align*}

\begin{align*}
    \widetilde w_c=-A-B=D_T(t)
\end{align*}

implies that
\begin{align*}
    A=0, \text{which is condition (i).}
\end{align*}
 Thus, $\widetilde w_c=-B=D_T(t).$
From Eqs.~\eqref{distance} and \eqref{Bvalue}, this implies that 
\begin{align*}
  \operatorname{Tr}
(\Delta_\epsilon\mathcal W')
=
-\|\Delta_\epsilon\|_1, 
\end{align*}

which holds if and only if
\begin{align*}
    \mathcal W'
=
-\operatorname{sign}(\Delta_\epsilon)
\end{align*}

on
$\operatorname{supp}(\Delta_\epsilon)$,
which establishes condition (ii). Hence, the converse holds. In other words, $\widetilde w_c=D_T(t)$ is obtained iff (i) and (ii) hold.
\end{proof}

The tightness criterion of Proposition \ref{prop:tight} generally requires witness operators with both positive and
negative eigenvalues. In particular, an optimal witness must mirror the
sign structure of $\Delta_\epsilon$, i.e., assign  
an eigenvalue $-1 (+1)$ on the positive (negative) eigenspace of
$\Delta_\epsilon$.

Observation \ref{observation1} and Theorem \ref{theorem1} give a lower bound to $D_T(t)$ in terms of the RHP measure $g(t)$ and $\widetilde w_c$, respectively. Therefore, a question of fundamental interest is a relation between $g(t)$ and $\widetilde w_c$. Below (in Theorem \ref{thm:optimal}), we provide a relation between them for positive witnesses. In contrast to entanglement theory, where witnesses are necessarily Hermitian and non-positive in order to satisfy the condition $\operatorname{Tr}(\rho W)<0$ for an entangled state $\rho$, the resource theory of non-Markovianity permits positive witnesses. This is because the Choi state associated with a non-Markovian dynamics is non-positive, allowing the condition in Eq.~\eqref{witness} to be satisfied even when the witness itself is positive. Moreover, positive witnesses are of special interest from an experimental perspective, since they correspond to measurement effects and can therefore be implemented as expectation values of positive-operator-valued-measures (POVMs) acting on the intermediate Choi operator~\cite{Bhattacharya20}. The set of all such positive witnesses is represented by $\{\mathfrak{W}^{+}\}$, i.e., 
\begin{equation}
    \mathfrak{W}^{+}=
    \left\{
    \mathcal{W}^{+}\geq0:
    \operatorname{Tr}(\mathcal{C}^{\mathrm{M}}\mathcal{W}^{+})\geq0,\ 
    \forall\,\mathcal{C}^{\mathrm{M}}\in \mathbb{F}_{\mathrm{M}}^{\epsilon}
    \right\}.\\
    \end{equation}
    The significance of using positive witnesses for deriving a relation between $g(t)$ and $\widetilde w_c$ is further illustrated through the following theorem.


\begin{theorem}
\label{thm:optimal}
If
\begin{equation}  \label{widetildewc2}
\widetilde w_c^+
=
-\lim_{\epsilon\to0^+}
\frac{
\operatorname{Tr}
\left[
\mathcal{C}(t+\epsilon,t)\mathcal W_c^+
\right]
}{\epsilon},
\end{equation}
where $\mathcal W_c^+ = \frac{2\mathcal W^+}{\lambda_+-\lambda_-}$ and $\mathcal W^{+} \in \mathfrak{W}^{+}$ represents a positive witness, then for
every such positive witness,
\[
\widetilde w_c^+
\le
g(t),
\]
where $g(t)$ is the RHP non-Markovianity rate given by
Eq.~\eqref{RHP}. The equality is attained by the $\epsilon$-adapted
family of positive witnesses
\[
\mathcal W^+(\epsilon)=P_-(\epsilon),
\]
where $P_-(\epsilon)$ denotes the projector onto the support of
$\mathcal C_\epsilon^-$. Consequently,
\begin{align*}
    g(t)
    &=
    \sup_{\{\mathcal W^+(\epsilon) \in  \mathfrak{W}^{+}\}}
    \left[
        \lim_{\epsilon\to0^+}
        \frac{
            -\operatorname{Tr}
            \Big(
                \mathcal{C}(t+\epsilon,t)\mathcal W_c^+(\epsilon)
            \Big)
        }{\epsilon}
    \right] \\
    &\le
    \sup_{\mathcal W \in  \mathfrak{W}_{\mathrm H}}
        \widetilde w_c
    \le
    D_T(t).
\end{align*}
Here the first supremum ranges over all $\epsilon$-adapted families of
positive witnesses, while the second supremum is taken over all
Hermitian witnesses satisfying
$\operatorname{Tr}(\mathcal{C}^{\mathrm{M}}\mathcal W)\ge0,
\forall\, \mathcal{C}^M\in\mathbb{F}_{\rm M}^{\epsilon}$.

If, in addition, $\operatorname{supp}(\mathcal{C}_\epsilon^-)
=
\operatorname{supp}(\mathcal{C}_{\epsilon'}^-)$ for all sufficiently small $\epsilon,\epsilon'>0$, then there exists a
fixed projector $P_-$, such that
$P_-(\epsilon)=P_-$
for all sufficiently small $\epsilon$, and therefore
\begin{align*}
    g(t)
    &=
    \max_{\mathcal W^+ \in {\mathfrak{W}^+}}
    \left[
        \widetilde w_c^+
    \right] \\
    &\le
    \sup_{\mathcal W \in {\mathfrak{W}_{\mathrm H}}}
    \left[
        \widetilde w_c
    \right]
    \le
    D_T(t).
\end{align*}

\end{theorem}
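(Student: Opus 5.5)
Write $\mathcal C_\epsilon:=\mathcal C(t+\epsilon,t)$ and use its Jordan decomposition $\mathcal C_\epsilon=\mathcal C_\epsilon^+-\mathcal C_\epsilon^-$, with $\mathcal C_\epsilon^\pm\ge0$ orthogonal and $P_-(\epsilon)$ the projector onto $\operatorname{supp}(\mathcal C_\epsilon^-)$. Trace preservation gives $\operatorname{Tr}\mathcal C_\epsilon=1$, so
\[
\|\mathcal C_\epsilon\|_1-1=\operatorname{Tr}\mathcal C_\epsilon^++\operatorname{Tr}\mathcal C_\epsilon^--(\operatorname{Tr}\mathcal C_\epsilon^+-\operatorname{Tr}\mathcal C_\epsilon^-)=2\operatorname{Tr}\mathcal C_\epsilon^-.
\]
Hence $g(t)=\lim_{\epsilon\to0^+}2\operatorname{Tr}(\mathcal C_\epsilon^-)/\epsilon$. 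The whole argument compares $-\operatorname{Tr}(\mathcal C_\epsilon\mathcal W_c^+)$ with $2\operatorname{Tr}\mathcal C_\epsilon^-$.

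\textbf{Upper bound.} For a positive witness we have $\lambda_-\ge0$, so $\lambda_+-\lambda_-\le\lambda_+$. We also have $0\le \mathcal W^+\le\lambda_+\mathbb I$, hence $\mathcal W^+-\lambda_-\mathbb I\le(\lambda_+-\lambda_-)\mathbb I$. Since $\mathcal C_\epsilon^+\ge0$, it contributes nonnegatively, and therefore
\[
-\operatorname{Tr}(\mathcal C_\epsilon\mathcal W^+)\le \operatorname{Tr}(\mathcal C_\epsilon^-\mathcal W^+).
\]
Bounding the right side by $\lambda_+\operatorname{Tr}\mathcal C_\epsilon^-$ is too crude when $\lambda_->0$. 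I would instead split $\mathcal W^+=\lambda_-\mathbb I+(\mathcal W^+-\lambda_-\mathbb I)$ and use $\operatorname{Tr}\mathcal C_\epsilon=1$:
\[
-\operatorname{Tr}(\mathcal C_\epsilon\mathcal W^+)=-\lambda_--\operatorname{Tr}\bigl(\mathcal C_\epsilon(\mathcal W^+-\lambda_-\mathbb I)\bigr)\le -\lambda_-+(\lambda_+-\lambda_-)\operatorname{Tr}\mathcal C_\epsilon^-\le(\lambda_+-\lambda_-)\operatorname{Tr}\mathcal C_\epsilon^-.
\]
The last step uses $\lambda_-\ge0$. Multiplying by $2/(\lambda_+-\lambda_-)$, dividing by $\epsilon$ and letting $\epsilon\to0^+$ gives $\widetilde w_c^+\le g(t)$. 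The scaling factor $2$ in $\mathcal W_c^+$ is exactly what matches the $2$ in $\|\mathcal C_\epsilon\|_1-1=2\operatorname{Tr}\mathcal C_\epsilon^-$.

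\textbf{Equality.} Take $\mathcal W^+(\epsilon)=P_-(\epsilon)$, a nontrivial projector, so $\lambda_+=1$ and $\lambda_-=0$ (assuming $P_-\ne\mathbb I$, which holds since $\operatorname{Tr}\mathcal C_\epsilon=1>0$). Then $\mathcal W_c^+=2P_-(\epsilon)$ and
\[
-\operatorname{Tr}(\mathcal C_\epsilon\mathcal W_c^+)=2\operatorname{Tr}\mathcal C_\epsilon^-,
\]
which reproduces $g(t)$ exactly. I also need $P_-(\epsilon)\in\mathfrak W^+$, i.e.\ $\operatorname{Tr}(\mathcal C^{\mathrm M}P_-)\ge0$ for Markovian $\mathcal C^{\mathrm M}$. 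This holds because any positive operator has nonnegative overlap with a PSD Choi operator.

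Together these show the supremum over $\epsilon$-adapted positive families equals $g(t)$. Each positive witness is in particular Hermitian, so this is at most $\sup_{\mathfrak W_{\mathrm H}}\widetilde w_c$, and Theorem~\ref{theorem1} bounds that by $D_T(t)$.

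\textbf{Fixed support.} If the support of $\mathcal C_\epsilon^-$ is eventually constant, then $P_-(\epsilon)=P_-$ is a single fixed element of $\mathfrak W^+$. It attains the upper bound, so the supremum over ordinary fixed positive witnesses becomes a maximum equal to $g(t)$.

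\textbf{Main obstacle.} The delicate step is the interchange of limit and supremum. An $\epsilon$-adapted witness is not a single witness, so the limit defining $\widetilde w_c^+$ must be read along the family. When I justify the inequality $\le\sup_{\mathfrak W_{\mathrm H}}\widetilde w_c$ in the general (non-fixed-support) case, I would argue along the family as well, or simply state the chain with the family interpretation. I would also check that the limits exist, for instance via differentiability of $\mathcal C_\epsilon$ in $\epsilon$.
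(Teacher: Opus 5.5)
Your proposal is correct and follows the paper's proof essentially step for step. It uses the same Jordan decomposition giving $g(t)=\lim_{\epsilon\to0^+}2\operatorname{Tr}(\mathcal C_\epsilon^-)/\epsilon$ and the same eigenvalue bound: your shift $\mathcal W^+=\lambda_-\mathbb I+(\mathcal W^+-\lambda_-\mathbb I)$ only repackages the paper's estimates $\operatorname{Tr}(\mathcal C_\epsilon^+\mathcal W^+)\ge\lambda_-\operatorname{Tr}\mathcal C_\epsilon^+$ and $\operatorname{Tr}(\mathcal C_\epsilon^-\mathcal W^+)\le\lambda_+\operatorname{Tr}\mathcal C_\epsilon^-$ combined with $\operatorname{Tr}\mathcal C_\epsilon^+=1+\operatorname{Tr}\mathcal C_\epsilon^-$. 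It then uses the same saturating family $P_-(\epsilon)$ and the same chain via Hermiticity and Theorem~\ref{theorem1}.

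The limit/supremum issue you flag is genuine, and the paper passes over it with ``every positive witness is Hermitian''. Assuming $\epsilon\mapsto\mathcal C_\epsilon$ is differentiable with $\mathcal C_0=|\phi\rangle\langle\phi|$, you can close it as follows. Take the fixed projector onto the negative spectral subspace of $Q\,\partial_\epsilon\mathcal C_\epsilon|_{\epsilon=0}\,Q$, where $Q=\mathbb I-|\phi\rangle\langle\phi|$. By first-order degenerate perturbation theory this projector already attains $g(t)$, with no fixed-support assumption.
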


\begin{proof}
Let
\begin{align*}
    \mathcal{C}(t+\epsilon,t)
=
\mathcal{C}_{\epsilon}^+-\mathcal{C}_{\epsilon}^-
\end{align*}

be the Jordan decomposition of the intermediate Choi operator.

Since $\mathcal{C}_{\epsilon}^+ \ge 0$ and $\mathcal{C}_{\epsilon}^+ \mathcal{C}_{\epsilon}^-= 0$, 
\begin{align*}
    \|\mathcal{C}(t+\epsilon,t)\|_1
= &
\operatorname{Tr}|\mathcal{C}_{\epsilon}^++\mathcal{C}_{\epsilon}^-|=
\operatorname{Tr}(\mathcal{C}_{\epsilon}^+)
+
\operatorname{Tr}(\mathcal{C}_{\epsilon}^-).
\end{align*}
Since the intermediate map is trace-preserving,
\begin{equation*}
    \operatorname{Tr}[\mathcal{C}(t+\epsilon,t)]=1 \implies \operatorname{Tr}[\mathcal{C}_{\epsilon}^+]
=
1+\operatorname{Tr}[\mathcal{C}_{\epsilon}^-]. 
\end{equation*}
So, $ \|\mathcal{C}(t+\epsilon,t)\|_1=1+2 \operatorname{Tr}[\mathcal{C}_{\epsilon}^-]. $
From Eq.~\eqref{RHP},
\begin{align} 
    g(t)
=
\lim_{\epsilon\to0^+}
\frac{
2\operatorname{Tr}(\mathcal{C}_{\epsilon}^-)
}{\epsilon}.   \label{RHPdeduction}
\end{align}  

Now, let $\mathcal W^+ \in \mathfrak{W}^+$ be an arbitrary positive witness with
largest and smallest eigenvalues $\lambda_+$ and $\lambda_-$.
Since
\begin{align*}
    \lambda_- \mathbb{I}
\le
\mathcal W^+
\le
\lambda_+ \mathbb{I},
\end{align*}
\begin{align}
   \operatorname{Tr}(\mathcal{C}^+_{\epsilon}\mathcal W^+)
\ge
\lambda_-\operatorname{Tr}(\mathcal{C}_{\epsilon}^+), \label{firstbound}
\end{align}  

and
\begin{align}
    \operatorname{Tr}(\mathcal{C}_{\epsilon}^-\mathcal W^+)
\le
\lambda_+\operatorname{Tr}(\mathcal{C}_{\epsilon}^-). \label{secondbound}
\end{align}

Combining Eqs.~\eqref{firstbound} and \eqref{secondbound},
\begin{equation}
\begin{split}
    \operatorname{Tr}[\mathcal{C}(t+\epsilon,t)\mathcal W^+]
= &
\operatorname{Tr}(\mathcal{C}_{\epsilon}^+\mathcal W^+)
-
\operatorname{Tr}(\mathcal{C}_{\epsilon}^-\mathcal W^+)
\\ & \ge
\lambda_-
+
(\lambda_--\lambda_+)\operatorname{Tr}(\mathcal{C}_{\epsilon}^-).
\end{split}
\end{equation}

Multiplying by $\frac{2}{\lambda_+-\lambda_-}$ gives,

\begin{align*}
\operatorname{Tr}[\mathcal{C}(t+\epsilon,t)\mathcal W_c^+]
\ge
\frac{2\lambda_-}{\lambda_+-\lambda_-}
-
2\operatorname{Tr}(\mathcal C_\epsilon^-).
\end{align*}

\begin{align*}
\begin{split}
\implies 
  -\operatorname{Tr}[\mathcal{C}(t+\epsilon,t)\mathcal W_c^+]
& \le
2\operatorname{Tr}(\mathcal{C}_{\epsilon}^-)
-
\frac{2\lambda_-}{\lambda_+-\lambda_-}\\ & 
\le
2\operatorname{Tr}(\mathcal{C}_{\epsilon}^-).
\end{split}
\end{align*}

Dividing by $\epsilon$ and taking the limit
$\epsilon\to0^+$

gives
\begin{align}\label{eq:wc_g}
  \widetilde w_c^+
\le
g(t).  
\end{align}
Since Eq.~\eqref{eq:wc_g} holds for every positive witness, we next derive the conditions under which this upper bound can be reached. 

For each fixed $\epsilon>0$, the negative contribution of the intermediate Choi operator is supported on  $\operatorname{supp}(\mathcal{C}_{\epsilon}^-)$.  The optimal positive witness may therefore depend on the intermediate Choi operator, and hence on $\epsilon$. So, we now allow the witness to form an $\epsilon$-adapted family $\mathcal{W}^{+}(\epsilon)$. Hence, for the $\epsilon$-adapted family, we choose

\begin{align*}
    \mathcal W^+=\mathcal W^+(\epsilon)=P_-(\epsilon),
\end{align*}
where $P_-(\epsilon)$ denotes the projector onto
$\operatorname{supp}(\mathcal{C}_{\epsilon}^-)$. Then for the projector, 
\begin{align*}
    \lambda_+=1,
\qquad
\lambda_-=0,
\end{align*}
and using Definition~\ref{def1},
\begin{align*}
    \mathcal W_c^+(\epsilon)=2P_-(\epsilon).
\end{align*}
Since $\mathcal W^+$ and hence $P_-(\epsilon)$ is supported on $\operatorname{supp}(\mathcal{C}_{\epsilon}^-)$, which is orthogonal to $\operatorname{supp}(\mathcal{C}_{\epsilon}^+)$,
\begin{equation*}
\mathcal{C}_{\epsilon}^+P_-(\epsilon)=0,
\qquad
\mathcal{C}_{\epsilon}^-P_-(\epsilon)=\mathcal{C}_{\epsilon}^-. 
\end{equation*}
Now, for each fixed $\epsilon>0$,
\begin{align*}
\operatorname{Tr}\;\big(\mathcal{C}(t+\epsilon,t)\,\mathcal W_{c}^+(\epsilon)\big)
&=
2\operatorname{Tr}\;\big(\mathcal{C}(t+\epsilon,t)\,P_-(\epsilon)\big)\\
&=
-2\operatorname{Tr}[\mathcal{C}_{\epsilon}^-].
\end{align*}
Dividing by $\epsilon$ and taking $\epsilon\to0^+$ on both sides give
\begin{align*}
    \lim_{\epsilon\to0^+}
    \frac{-\operatorname{Tr}\!\big(\mathcal{C}(t+\epsilon,t)\,\mathcal W_{c}^{+}(\epsilon)\big)}{\epsilon}
=
\lim_{\epsilon\to0^+}
\frac{2\operatorname{Tr}[\mathcal{C}_{\epsilon}^-]}{\epsilon}, 
\end{align*}
which is the RHP measure as given by Eq.~\eqref{RHPdeduction}. 
Thus, the upper bound in Eq.~\eqref{eq:wc_g} is saturated by the $\epsilon$-adapted family
$\{P_-(\epsilon)\}_{\epsilon>0}$. Hence, the RHP measure is generally obtained upon optimizing over all $\epsilon$-adapted families of positive witnesses, i.e., 
\begin{align}\label{eq:g_bound}
g(t)
=
\sup_{\mathcal W^+(\epsilon)\in\mathfrak W^+}
\left[
\lim_{\epsilon\to0^+}
\frac{
-\operatorname{Tr}\!\left(
\mathcal C(t+\epsilon,t)\mathcal W_c^+(\epsilon)
\right)
}{\epsilon}
\right].
\end{align}
If in addition, for any two distinct and sufficiently small $\epsilon,\epsilon'>0$, $\operatorname{supp}(\mathcal{C}_\epsilon^-)=\operatorname{supp}(\mathcal{C}_{\epsilon'}^-)$ (as verified for the amplitude-damping, thermal and dephasing channels in Section~\ref{Sec:Parity_Section}), then there exists a fixed projector $P_-$ such that $P_-(\epsilon)=P_-$ for all sufficiently small $\epsilon$. Consequently,
the $\epsilon$-adapted optimal witness reduces to a fixed positive
witness, and
\begin{align*}
    \widetilde w_c^+
    =
    g(t)
    =
    \max_{\mathcal W^+ \in \mathfrak{W}^+}
    \widetilde w_c^+,
\end{align*}
which establishes the optimality of $P_-$.

Since every positive witness is Hermitian, therefore, 
\begin{align*}
    \sup_{\mathcal{W}^{+} \in \mathfrak{W}^+}
    \widetilde w_c^{+}
    \leq
    \sup_{\mathcal{W}\in\mathfrak{W}_{\mathrm H}}\widetilde w_c.
\end{align*}
Using Eq.~\eqref{eq:g_bound}, this gives
\begin{align}g(t)
    \leq
    \sup_{\mathcal W\in\mathfrak{W}_{\mathrm H}}
    \widetilde w_c.
    \label{firstrelation}
\end{align}
Moreover, from Theorem~\ref{theorem1},
\begin{align}
    \sup_{\mathcal{W}\in\mathfrak{W}_{\mathrm{H}}} \widetilde{w}_c
    \leq
    D_T(t). \label{secondrelation}
\end{align}
From Eqs.~\eqref{firstrelation} and \eqref{secondrelation}, it follows that
\begin{align}
    g(t)
    \leq
    \sup_{\mathcal{W}\in\mathfrak{W}_{\mathrm{H}}} \widetilde{w}_c
    \leq
    D_T(t).
\end{align}

\end{proof}
Thus, Theorem \ref{thm:optimal} establishes a unified operational relation between the RHP, witness-based and distance-based measures of non-Markovianity.
\begin{remark}
While Theorem~\ref{theorem1} and Proposition~\ref{prop:tight} are formulated for general Hermitian witnesses, i.e., $\mathcal{W} \in \mathfrak{W}_{\rm H}$, Theorem~\ref{thm:optimal} specializes to the subclass of positive witnesses, $\mathcal{W}^+ \in \mathfrak{W}^+$. Nevertheless, the connection between the RHP, witness-based and distance-based measures remains valid for arbitrary Hermitian witnesses. Furthermore, since the distance-based measure is defined as the minimum distance between a non-Markovian Choi operator and the set of Markovian Choi operators, we explicitly distinguish these operators by representing them as $\mathcal{C}^{\mathrm{N}}$ and $\mathcal{C}^{\mathrm{M}}$, respectively, throughout Theorems~\ref{theorem1} and~\ref{thm:optimal}. In contrast, the witness-based quantity is evaluated as the expectation value of a witness over a single arbitrary Choi operator, without requiring any prior distinction between Markovian and non-Markovian Choi operators. Therefore, for notational simplicity, the superscripts "$\mathrm{N}$" and "$\mathrm{M}$" are omitted in the subsequent discussions. 
\end{remark} 

In the following section, we demonstrate the implementation of the proposed witnesses through parity measurements by providing illustrative examples.
\section{Parity witnesses as operationally optimal detectors of non-Markovianity}\label{Sec:Parity_Section}

We illustrate the proposed witness formalism using three paradigmatic models of open quantum system dynamics: (i) a non-Markovian amplitude-damping channel generated by a Lorentzian vacuum reservoir, (ii) a generalized amplitude-damping channel describing interaction with a finite-temperature bosonic reservoir, and (iii) the general Pauli channel, with pure dephasing as an important microscopic realization. These examples encompass the principal decoherence mechanisms of a qubit, namely energy relaxation, thermal excitation, and Pauli-type noise, respectively.

For each model, we derive the time-local master equation from the underlying system--environment interaction and construct the infinitesimal Choi operator associated with the intermediate dynamical map, i.e., 
\begin{equation}
\Lambda_{t+\epsilon,t}
\simeq
\mathbb{I}+\epsilon\mathcal L_t,
\qquad
\epsilon\ll1,
\end{equation}
where $\mathcal L_t$ denotes the time-local Lindblad generator. The negative eigenspace of the corresponding Choi operator determines the optimal witness according to Theorem~\ref{thm:optimal}. In Sec.~\ref{sec:operational_reconstruction}, we further derive experimentally accessible witness operators and show that they faithfully reproduce the RHP measure of non-Markovianity.

The corresponding Choi operators for the amplitude-damping channel, generalized amplitude-damping (GAD) channel and the Pauli (P) channel are denoted by $\mathcal{C}_{\rm Amp}(t+\epsilon,t)$, $\mathcal{C}_{\rm Th}(t+\epsilon,t)$ and $\mathcal{C}_{\rm P}(t+\epsilon,t)$ respectively, and their definition follows in accordance with Eq.~\eqref{choi}.  For convenience, we introduce the shorthand notation
$
\mathcal{C}_{\rm Amp}(t+\epsilon,t)
\equiv
\mathcal{C}_{\rm Amp}^{\epsilon}(t), \qquad
\mathcal{C}_{\rm Th}(t+\epsilon,t)
\equiv
\mathcal{C}_{\rm Th}^{\epsilon}(t), \qquad 
\mathcal{C}_{\rm P}(t+\epsilon,t)
\equiv
\mathcal{C}_{\rm P}^{\epsilon}(t).
$

We define the parity projectors,
\begin{align}
P_k^\pm
&=
\frac12
\left(
\mathbb{I}
\pm
\sigma_k\otimes\sigma_k
\right),
\qquad
k=x,y,z,
\label{eq:parity_projectors}
\end{align}
which project onto the $\pm1$ eigenspaces of the correlation observables $\sigma_k\otimes\sigma_k$.
Below, we provide illustrative examples.
\subsection{Examples of witness-based quantification of different non-markovian dynamics}\label{Sec:example}

\begin{example}\label{example1}
\textbf{Amplitude-damping channel:}
The time-local master equation governs the reduced dynamics \cite{Bellomo2007},
\begin{equation}
\label{eq:amp_master}
\frac{d\rho}{dt}
=
\mathcal L_t(\rho)
=
\gamma(t)
\left(
\sigma_-\rho\sigma_+
-
\frac12
\{\sigma_+\sigma_-,\rho\}
\right),
\end{equation}
where
$
\sigma_+=|1\rangle\langle0|,
\;
\sigma_-=|0\rangle\langle1|.
$

This dynamics arises from a two-level atom interacting with a bosonic reservoir described by the Hamiltonian
$H = H_S + H_B + H_I$, where $H_S = \Omega_0 \sigma_+ \sigma_-$, $H_B = \sum_k \Omega_k b_k^\dagger b_k$, and $H_I = \sum_k \left( g_k \sigma_+ b_k + g_k^* \sigma_- b_k^\dagger \right)$.

For a Lorentzian reservoir spectral density
\begin{align}
J(\Omega)
=
\frac{\gamma_0\lambda^2}
{2\pi
\left[
(\Omega-\Omega_0)^2+\lambda^2
\right]},
\label{eq:Lorentzian_SD}
\end{align}
the decay rate is
\begin{align}
\gamma(t)
=
-2
\operatorname{Re}
\left[
\frac{\dot G(t)}{G(t)}
\right],
\end{align}
where
\begin{align}
G(t)
=
e^{-\lambda t/2}
\left[
\cosh\!\left(\frac{t\Delta }{2}\right)
+
\frac{\lambda}{\Delta}
\sinh\!\left(\frac{t\Delta }{2}\right)
\right],
\end{align}
with
\begin{align}
\Delta
=
\sqrt{\lambda^2-2\gamma_0\lambda}.
\end{align}
The dynamics is CP divisible for $\lambda>2\gamma_0$, while $\lambda<2\gamma_0$ leads to temporarily negative decay rates and hence CP-indivisible non-Markovian evolution \cite{Bellomo2007, mukherjee2015efficiency}.

The corresponding infinitesimal Choi matrix is
\begin{equation}
\label{choi_amp}
\mathcal{C}_{\rm Amp}^{\epsilon}(t)=
\frac12
\begin{pmatrix}
1
&
0
&
0
&
1-\frac{\gamma(t)\epsilon}{2}
\\
0
&
0
&
0
&
0
\\
0
&
0
&
\gamma(t)\epsilon
&
0
\\
1-\frac{\gamma(t)\epsilon}{2}
&
0
&
0
&
1-\gamma(t)\epsilon
\end{pmatrix}.
\end{equation}

Its eigenvalues are
\begin{align}
\label{eq:eigen_amp}
\lambda_1
&=
0,
\qquad
\lambda_2
=
\frac{\gamma(t)\epsilon}{2},
\nonumber\\
\lambda_3
&=
\frac14
\left[
2-\gamma(t)\epsilon
-
\sqrt{
4
+
2\gamma(t)\epsilon
\left(
-2+\gamma(t)\epsilon
\right)
}
\right],
\nonumber\\
\lambda_4
&=
\frac14
\left[
2-\gamma(t)\epsilon
+
\sqrt{
4
+
2\gamma(t)\epsilon
\left(
-2+\gamma(t)\epsilon
\right)
}
\right].
\end{align}

From Eq.~\eqref{eq:eigen_amp}, the only eigenvalue that can become negative is
\begin{align}
\lambda_2
=
\frac{\gamma(t)\epsilon}{2},
\end{align}
whose eigenspace is the odd-parity $z$ sector,
\begin{align}
P_-
=
P_z^-.
\end{align}

Hence, by Theorem~\ref{thm:optimal},
\begin{align}
W_{\rm Amp}^{\rm opt}
=
P_z^-
=
\frac12
\left(
\mathbb{I}-\sigma_z\otimes\sigma_z
\right).
\end{align}

Using Definition~\ref{def1},
\begin{align}
W_c^{\rm Amp}
=
2W_{\rm Amp}^{\rm opt}
=
\mathbb{I}-\sigma_z\otimes\sigma_z,
\end{align}
and therefore
\begin{align}
\label{w_c_amp}
\widetilde w_c^{\rm Amp}
=
-
\frac{
\Tr
\left(
\mathcal{C}_{\rm Amp}^{\epsilon}(t)
W_c^{\rm Amp}
\right)
}{\epsilon}
=
-\gamma(t).
\end{align}
\end{example}

\vspace{0.5cm}

\begin{example}\label{example2}
\textbf{Thermal channel:}
The time-local master equation governs the reduced dynamics~(see Sec.~3.4.5 of Ref.~\cite{breuer}),
\begin{equation}
\label{eq:Th_master}
\begin{aligned}
\frac{d\rho}{dt}
&=
\mathcal L_t(\rho)
=
\gamma_\downarrow(t)
\left(
\sigma_-\rho\sigma_+
-
\frac12
\{\sigma_+\sigma_-,\rho\}
\right)
\\
&\quad+
\gamma_\uparrow(t)
\left(
\sigma_+\rho\sigma_-
-
\frac12
\{\sigma_-\sigma_+,\rho\}
\right),
\end{aligned}
\end{equation}
where
\begin{align}
\gamma_\downarrow(t)
&=
\gamma(t)
\left[
N(\Omega_0)+1
\right],
\nonumber\\
\gamma_\uparrow(t)
&=
\gamma(t)
N(\Omega_0),
\end{align}
with
\begin{align}
N(\Omega_0)
=
\frac{1}
{e^{\beta\Omega_0}-1},
\qquad
\beta=\frac{1}{k_BT},
\end{align}
being the mean thermal occupation number.

This dynamics arises from a two-level atom interacting with a bosonic reservoir initially prepared in thermal equilibrium. The total Hamiltonian is
$
H=H_S+H_B+H_I,
$
where
$
H_S=\Omega_0\sigma_+\sigma_-,
$
$
H_B=\sum_k\Omega_kb_k^\dagger b_k,
$
and
$
H_I=\sum_k
\left(
g_k\sigma_+b_k
+
g_k^*\sigma_-b_k^\dagger
\right).
$

The reservoir is initially prepared in the Gibbs state
\begin{align}
\rho_B
=
\frac{e^{-\beta H_B}}
{\Tr\!\left(e^{-\beta H_B}\right)}.
\end{align}

As in the vacuum amplitude-damping model, we consider the same Lorentzian reservoir spectral density given in Eq.~\eqref{eq:Lorentzian_SD}. Consequently, the time-dependent decay rate $\gamma(t)$ remains identical to that of the vacuum amplitude-damping model, while the finite temperature modifies only the emission and absorption rates through the thermal occupation number $N(\Omega_0)$.

The corresponding infinitesimal Choi matrix is
\begin{equation}
\label{eq:Th_choi}
\mathcal{C}_{\rm Th}^{\epsilon}(t)
=
\frac12
\begin{pmatrix}
1-\gamma_\uparrow\epsilon
&
0
&
0
&
1-\frac{(\gamma_\uparrow+\gamma_\downarrow)\epsilon}{2}
\\
0
&
\gamma_\uparrow\epsilon
&
0
&
0
\\
0
&
0
&
\gamma_\downarrow\epsilon
&
0
\\
1-\frac{(\gamma_\uparrow+\gamma_\downarrow)\epsilon}{2}
&
0
&
0
&
1-\gamma_\downarrow\epsilon
\end{pmatrix}.
\end{equation}

Its eigenvalues are
\begin{align}
\lambda_1 &=
\frac{\gamma_{\downarrow}(t)\epsilon}{2}, \nonumber\\
\lambda_2 &=
\frac{\gamma_{\uparrow}(t)\epsilon}{2}, \nonumber\\
\lambda_3 &=
\frac{1}{4}\!\left[
2-(\gamma_{\downarrow}(t)+\gamma_{\uparrow}(t))\epsilon
-\sqrt{\delta}
\right], \nonumber\\
\lambda_4 &=
\frac{1}{4}\!\left[
2-(\gamma_{\downarrow}(t)+\gamma_{\uparrow}(t))\epsilon
+\sqrt{\delta}
\right], 
\end{align}
where \begin{equation*}
\delta=
4-4\gamma_{\downarrow}(t)\epsilon+
\left[
2\gamma_{\downarrow}^2(t)
-2\gamma_{\downarrow}(t)\gamma_{\uparrow}(t)
+\gamma_{\uparrow}^2(t)
\right]\epsilon^2 .
\label{Delta}
\end{equation*}

From the above expression, the only eigenvalues that can become negative are
\begin{align}
\lambda_1
=
\frac{\gamma_\downarrow(t)\epsilon}{2},
\qquad
\lambda_2
=
\frac{\gamma_\uparrow(t)\epsilon}{2}.
\end{align}

The corresponding eigenspace coincides with the odd-parity $z$ sector,
\begin{align}
P_-
=
P_z^-.
\end{align}

Hence, by Theorem~\ref{thm:optimal},
\begin{align}
W_{\rm Th}^{\rm opt}
=
P_z^-
=
\frac12
\left(
\mathbb{I}-\sigma_z\otimes\sigma_z
\right).
\end{align}

Since $W_{\rm Th}^{\rm opt}$ is a projector, Definition~\ref{def1} gives
\begin{align}
W_c^{\rm Th}
=
2W_{\rm Th}^{\rm opt}
=
\mathbb{I}-\sigma_z\otimes\sigma_z.
\end{align}

Therefore,
\begin{align}
\label{eq:Th_witness}
\widetilde w_c^{\rm Th}
&=
-
\frac{
\Tr
\left(
\mathcal{C}_{\rm Th}^{\epsilon}(t)
W_c^{\rm Th}
\right)
}{\epsilon}
\nonumber\\
&=
-
\left(
\gamma_\uparrow(t)
+
\gamma_\downarrow(t)
\right)
\nonumber\\
&=
-
\left(
2N(\Omega_0)+1
\right)
\gamma(t).
\end{align}

In the zero-temperature limit,
$
N(\Omega_0)\rightarrow0,
$
the thermal channel continuously reduces to the vacuum amplitude-damping channel,
\begin{align}
\widetilde w_c^{\rm Th}
=
-\gamma(t),
\end{align}
which coincides with Eq.~\eqref{w_c_amp}.
\end{example}
\begin{example}\label{example3}
    
The dynamics of a qubit undergoing Pauli noise is described by the following time-local master equation~\cite{Hall2014},

\begin{equation}
\label{eq:pauli_master}
\begin{aligned}
\frac{d\rho}{dt}
&=
\mathcal{L}(\rho)
=
\gamma_x(t)(\sigma_x\rho\sigma_x-\rho)
+
\gamma_y(t)(\sigma_y\rho\sigma_y-\rho)
\\
&\quad
+
\gamma_z(t)(\sigma_z\rho\sigma_z-\rho).
\end{aligned}
\end{equation}

The corresponding Choi matrix is,

\begin{equation}
\mathcal{C}_{\rm P}^{\epsilon}(t)
=
\frac12
\begin{pmatrix}
a & 0 & 0 & b\\
0 & c & d & 0\\
0 & d & c & 0\\
b & 0 & 0 & a
\end{pmatrix},
\label{eq:Pauli_choi}
\end{equation}
where $a=1-\epsilon(\gamma_x+\gamma_y)$,
$b=1-\epsilon(\gamma_x+\gamma_y+2\gamma_z)$,
$c=\epsilon(\gamma_x+\gamma_y)$, and
$d=\epsilon(\gamma_x-\gamma_y)$.

The eigenvalues of this Pauli Choi~\eqref{eq:Pauli_choi},
\begin{equation}
\label{eq:eigen_pauli}
\begin{aligned}
\lambda_1
&=
1-\epsilon\bigl(\gamma_x(t)+\gamma_y(t)+\gamma_z(t)\bigr),
\qquad
\lambda_2
=
\epsilon\gamma_x(t),
\\
\lambda_3
&=
\epsilon\gamma_y(t),
\qquad
\lambda_4
=
\epsilon\gamma_z(t).
\end{aligned}
\end{equation}
Equation~(\ref{eq:eigen_pauli}) shows that the negative
subspace depends on which decay rates become negative.

Theorem~\ref{thm:optimal} therefore gives the adaptive
optimal witness
\begin{align}
W_{\rm Pauli}^{\rm opt}
=
\sum_{\gamma_i(t)<0}
|B_i\rangle\langle B_i|,
\end{align}

where
\begin{align}
|B_x\rangle
&=
|\Psi^+\rangle,
&
|B_y\rangle
&=
|\Psi^-\rangle,
&
|B_z\rangle
&=
|\Phi^-\rangle.
\end{align}

Since $W_{\rm Pauli}^{\rm opt}$ is a projector,
Definition~\ref{def1} gives
\begin{align}
W_c^{\rm Pauli}
=
2W_{\rm Pauli}^{\rm opt}
=
2
\sum_{\gamma_i(t)<0}
|B_i\rangle\langle B_i|.
\end{align}

Although the adaptive witness saturates the RHP measure,
it requires selecting Bell projectors according to the
signs of the instantaneous decay rates.

Motivated by experimental simplicity, we instead employ
the fixed parity witnesses
$
P_z^-,
P_x^-,
P_y^-.
$

The corresponding normalized witnesses are
\begin{align}
W_{c,z}
&=
\mathbb{I}-\sigma_z\otimes\sigma_z,
\\
W_{c,x}
&=
\mathbb{I}-\sigma_x\otimes\sigma_x,
\\
W_{c,y}
&=
\mathbb{I}-\sigma_y\otimes\sigma_y.
\end{align}

The witnessed quantities become
\begin{align}\label{w_c_Pauli}
\widetilde w_z
&=
-\frac{
\Tr\!\left(\mathcal{C}_{\rm P}^{\epsilon}(t)W_{c,z}\right)
}{\epsilon}
=
-2\bigl(\gamma_x(t)+\gamma_y(t)\bigr),
\nonumber\\
\widetilde w_x
&=
-\frac{
\Tr\!\left(\mathcal{C}_{\rm P}^{\epsilon}(t)W_{c,x}\right)
}{\epsilon}
=
-2\bigl(\gamma_y(t)+\gamma_z(t)\bigr),
\nonumber\\
\widetilde w_y
&=
-\frac{
\Tr\!\left(\mathcal{C}_{\rm P}^{\epsilon}(t)W_{c,y}\right)
}{\epsilon}
=
-2\bigl(\gamma_x(t)+\gamma_z(t)\bigr).
\end{align}

The three rates are reconstructed as
\begin{align}\label{Pauli_reconstruction}
\gamma_x(t)
&=
\frac{
\widetilde w_x-\widetilde w_y-\widetilde w_z
}{4},
\nonumber\\
\gamma_y(t)
&=
\frac{
\widetilde w_y-\widetilde w_x-\widetilde w_z
}{4},
\nonumber\\
\gamma_z(t)
&=
\frac{
\widetilde w_z-\widetilde w_x-\widetilde w_y
}{4}.
\end{align}

Consequently,
\begin{align}
g(t)
=
2\sum_{i=x,y,z}
\max\{0,-\gamma_i(t)\}.
\end{align}
\end{example}
The examples above show that Bell-parity witnesses provide
an operational realization of the optimal positive witness
construction of Theorem~\ref{thm:optimal}. For amplitude
damping and thermal channels, the negative part of the
intermediate Choi operator always belongs to a fixed
Bell-parity sector, yielding a universal parity witness
that exactly reproduces the RHP non-Markovianity rate.
For general Pauli dynamics, the optimal witness depends on
the instantaneous sign pattern of the decay rates, although
the full non-Markovianity can still be reconstructed from
parity measurements in the three Pauli bases.

To illustrate how the general Pauli reconstruction applies to a physically relevant microscopic model, we consider pure dephasing induced by a bosonic environment.

\subsubsection{Microscopic realization: Pure dephasing channel}\label{dephasing}

As a concrete realization of the Pauli channel, we consider a qubit coupled to a bosonic reservoir through a longitudinal interaction. In this case, the coupling preserves the qubit's energy and affects only its phase, giving rise to pure dephasing. Because the model is exactly solvable, it serves as a standard microscopic description of non-Markovian dephasing~\cite{Haikka2013} in many experimental platforms, such as solid-state qubits, superconducting circuits, and quantum optical systems. The total Hamiltonian is $H = H_S + H_B + H_I$, where
$H_S = \frac{\Omega_0}{2}\sigma_z$,
$H_B = \sum_k \Omega_k b_k^\dagger b_k$, and
$H_I = \sigma_z \sum_k \left( g_k b_k^\dagger + g_k^* b_k \right)$.

The Ohmic-class spectral density characterizes the reservoir
\begin{align}
J(\Omega)
=
\eta
\Omega_c^{1-s}
\Omega^s
e^{-\Omega/\Omega_c},
\label{eq:ohmic_sd}
\end{align}
where $\eta$ is the coupling strength, $\Omega_c$ is the cutoff frequency, and $s$ determines the reservoir class. The cases $s<1$, $s=1$, and $s>1$ correspond to sub-Ohmic, Ohmic, and super-Ohmic reservoirs, respectively.

The reduced dynamics is governed by the time-local master equation
\begin{align}
\label{eq:dephasing_master}
\frac{d\rho}{dt}
=
\gamma(t)
\left(
\sigma_z\rho\sigma_z-\rho
\right),
\end{align}
where
\begin{align}
\gamma(t)
=
\frac{d}{dt}\mathcal{F}(t),
\end{align}
with
\begin{align}
\mathcal{F}(t)
=
\frac12
\int_0^\infty
d\Omega\,
J(\Omega)
\frac{1-\cos(\Omega t)}
{\Omega^2}
\coth
\left(
\frac{\beta\Omega}{2}
\right).
\end{align}
Equivalently,
\begin{align}
\gamma(t)
=
\frac12
\int_0^\infty
d\Omega\,
J(\Omega)
\coth
\left(
\frac{\beta\Omega}{2}
\right)
\frac{\sin(\Omega t)}{\Omega}.
\end{align}

For simplicity, we restrict our analysis to the zero-temperature limit ($\beta\rightarrow\infty$), for which
\begin{align}
\coth\left(\frac{\beta\Omega}{2}\right)\rightarrow1.
\end{align}
Consequently,
\begin{align}
\mathcal{F}(t)
&=
\frac12
\int_0^\infty
d\Omega\,
J(\Omega)
\frac{1-\cos(\Omega t)}{\Omega^2},
\nonumber\\
\gamma(t)
&=
\frac12
\int_0^\infty
d\Omega\,
J(\Omega)
\frac{\sin(\Omega t)}{\Omega}.
\end{align}

Using the Laplace-type integral
$
\int_0^\infty x^{\nu-1}e^{-zx}\,dx
=
\Gamma(\nu)z^{-\nu}
$
with $z=\Omega_c^{-1}-it$ and $\Re(z)>0$, we obtain
\begin{align}
\gamma(t)
&=
\frac{\eta\,\Omega_c\,\Gamma(s)}{2}
\frac{\sin\!\left[s\arctan(\Omega_ct)\right]}
{\left(1+\Omega_c^2t^2\right)^{s/2}}.
\label{eq:gamma_t}
\end{align}
For the Ohmic case ($s=1$), $\gamma(t)=\eta\Omega_c^2t/[2(1+\Omega_c^2t^2)].$ At zero temperature, $\gamma(t)\ge0$ for $s\le2$, whereas for $s>2$ it vanishes at $t_k=\Omega_c^{-1}\tan(k\pi/s),$
where $k=1,2,\ldots,\lfloor(s-1)/2\rfloor$, and alternates in sign across successive zeros. Consequently, the intervals where $\gamma(t)<0$ coincide with the loss of complete positivity divisibility of the dynamical map and therefore characterize non-Markovian dynamics in the sense of the RHP criterion~\cite{RHP}. These intervals are directly detected by the witness introduced below.

Within the general Pauli-channel parametrization of Eq.~\eqref{eq:pauli_master}, the only nonvanishing decay rate is
\begin{align}
\gamma_x(t)
=
0,
\qquad
\gamma_y(t)
=
0,
\qquad
\gamma_z(t)
=
\gamma(t).
\end{align}

Substituting these rates into Eq.~\eqref{w_c_Pauli} immediately yields
\begin{align}
\widetilde w_z
&=
0,
\nonumber\\
\widetilde w_x
&=
-2\gamma(t),
\nonumber\\
\widetilde w_y
&=
-2\gamma(t).
\end{align}

thereby recovering the microscopic dephasing generator.

Therefore, whenever $\gamma(t)<0$, the corresponding witness rates $\widetilde w_x$ and $\widetilde w_y$ become positive, providing an operational signature of CP-indivisible dynamics. In the present convention, the corresponding non-Markovianity rate is
\begin{align}
g(t)
=
\max
\left\{
0,
-2\;\gamma(t)
\right\}.
\end{align}

Although the witness formalism is naturally expressed in
terms of the intermediate Choi operator
$
\mathcal{C}(t+\epsilon,t),
$
the corresponding intermediate propagator
$
\Lambda_{t+\epsilon,t}
=
\Lambda_{t+\epsilon,0}\Lambda_{t,0}^{-1}
$
need not be completely positive and therefore does not
represent a physical quantum channel. Nevertheless, the
witness values and the associated RHP
non-Markovianity rate can be reconstructed from
experimentally accessible two-qubit correlations measured
on a maximally entangled probe evolving under the physical
dynamics.

\subsection{Operational reconstruction of the intermediate Choi operator from correlation measurements}
\label{sec:operational_reconstruction}

The witness construction introduced above is formulated in terms of
the intermediate Choi operator associated with
$\Lambda_{t+\epsilon,t}$. In the non-Markovian regime, however, this
intermediate map need not be completely positive and therefore cannot,
in general, be implemented as a physical quantum channel \cite{RHPreview}. To make our proposed witness construction experimentally accessible, we therefore seek to extract the same witness information from the physical evolution itself, without directly implementing the intermediate map. The same information can instead be accessed through measurements performed along the physical evolution from the initial time to $t$.


The basic idea is to prepare a maximally entangled ancilla--system
pair, allow only the system qubit to interact with the environment,
and then measure suitable two-qubit correlations. The physical state
available at time $t$ is
\begin{equation}
\rho_{AS}(t)
=
\left(
\mathbb I_A\otimes\Lambda_{t,0}
\right)
\left(
|\Phi^+\rangle\langle\Phi^+|
\right),
\label{eq:physical_state}
\end{equation}
where
\begin{equation}
|\Phi^+\rangle_{AS}
=
\frac{|00\rangle+|11\rangle}{\sqrt2}.
\end{equation}

The complete procedure is illustrated in
Fig.~\ref{fig:operational_protocol}. The three shaded regions of the
circuit correspond to the state preparation, the physical
open-system evolution, and the final correlation measurement,
respectively.

\begin{figure*}[t]
    \centering
    \includegraphics[
        width=0.9\textwidth]{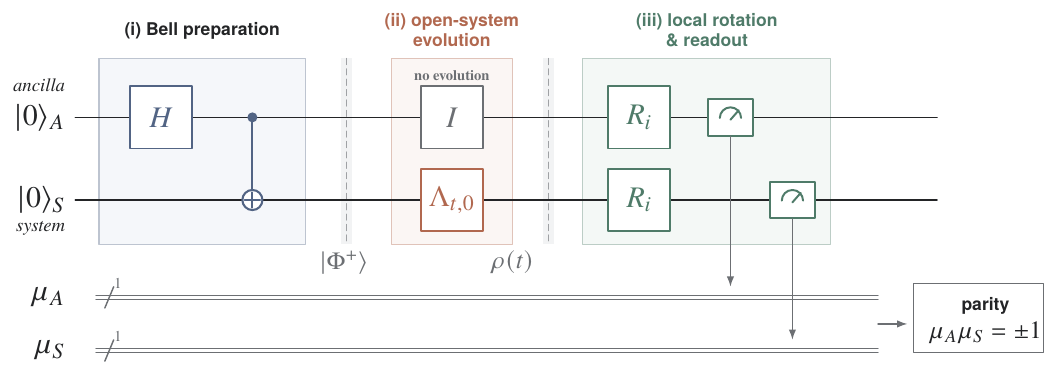}
    \caption{\footnotesize
\textbf{Single-run experimental circuit for measuring a two-qubit
correlation.}
\textbf{(i) Bell preparation:}
the ancilla $A$ and system $S$ are initialized in
$|0\rangle_A|0\rangle_S$, and a Hadamard gate followed by a CNOT
prepares
$|\Phi^+\rangle_{AS}
=(|00\rangle+|11\rangle)/\sqrt2$.
\textbf{(ii) Open-system evolution:}
only the system qubit evolves under the physical map
$\Lambda_{t,0}$, while the ancilla remains isolated, producing
$\rho_{AS}(t)
=(\mathbb I_A\otimes\Lambda_{t,0})
(|\Phi^+\rangle\langle\Phi^+|)$.
\textbf{(iii) Local rotation and readout:}
the correlation
$\mathcal S_i(t)=\langle\sigma_i\otimes\sigma_i\rangle_t$
is measured by applying the corresponding local basis rotation
$R_i$ to both qubits, with
$R_i^\dagger\sigma_zR_i=\sigma_i$, followed by computational-basis
readout. The measured outcomes are mapped to
$\mu_A,\mu_S=\pm1$, and their product gives the parity of that run.
Repeating the circuit many times at the same $t$ yields the
correlation $\mathcal S_i(t)$.}
    \label{fig:operational_protocol}
\end{figure*}

\textbf{\textit{Step 1: Bell-state preparation.}} The first shaded region of Fig.~\ref{fig:operational_protocol},
labelled \textbf{(i) Bell preparation}, prepares the ancilla-assisted
probe state. The two qubits are initialized in
$|0\rangle_A|0\rangle_S$. A Hadamard gate is applied to the ancilla,
followed by a CNOT with the ancilla as the control and the system as
the target. The state at the end of this region is therefore
\begin{equation}
|\Phi^+\rangle_{AS}
=
\frac{|00\rangle+|11\rangle}{\sqrt2}.
\label{eq:bell_probe}
\end{equation}

\textbf{\textit{Step 2: Physical open-system evolution.}} The second shaded region of Fig.~\ref{fig:operational_protocol},
labelled \textbf{(ii) open-system evolution}, contains the dynamics
to be characterized. The ancilla remains isolated, while the system
qubit evolves under the physical map $\Lambda_{t,0}$ for a
controllable time $t$. The Bell state is therefore transformed as
\begin{equation}
|\Phi^+\rangle\langle\Phi^+|
\longrightarrow
\rho_{AS}(t)
=
(\mathbb I_A\otimes\Lambda_{t,0})
(|\Phi^+\rangle\langle\Phi^+|).
\end{equation}

Importantly, the experimental protocol requires only the physical evolution $\Lambda_{t,0}$. The intermediate propagator
$\Lambda_{t+\epsilon,t}$, whose lack of complete positivity signals
non-Markovianity, is never required as an experimentally realizable
operation.

\textbf{\textit{Step 3: Correlation measurement.}} The third shaded region of Fig.~\ref{fig:operational_protocol},
labelled \textbf{(iii) local rotation \& readout}, is used to measure
a two-qubit correlation of the form
\begin{equation}
\mathcal{S}_i(t)
=
\left\langle
\sigma_i\otimes\sigma_i
\right\rangle_t,
\qquad
i\in\{x,y,z\}.
\label{eq:general_correlation}
\end{equation}
At this stage, the index $i$ is left unspecified. Its value is fixed
only after the relevant witness for a given dynamics has been
identified. In other words, the experimental architecture remains
the same; only the local measurement basis changes.

Since the final qubit readout is performed in the computational
basis, the desired Pauli basis is first rotated into the $Z$ basis.
This is the role of the two $R_i$ gates shown in the third region of
Fig.~\ref{fig:operational_protocol}. They are chosen such that
\begin{equation}
R_i^\dagger\sigma_zR_i
=
\sigma_i.
\label{eq:Ri_condition}
\end{equation}
For the three Pauli bases, one may choose
\begin{equation}
R_z=\mathbb I,
\qquad
R_x=H,
\qquad
R_y=HS^\dagger,
\label{eq:Ri_choices}
\end{equation}
where
$S=
\begin{pmatrix}
1 & 0\\
0 & i
\end{pmatrix}.$

Thus, $R_i$ should not be interpreted as an additional part of the
open-system dynamics. It simply specifies which correlation is being
read out. Once the witness analysis identifies the relevant Pauli
correlation, the corresponding $R_i$ is selected in the same circuit.
If several correlations are required, the experiment is repeated
with the corresponding measurement settings.

After the local rotations, both qubits are measured in the
computational basis, with outcomes
$b_A,b_S\in\{0,1\}$. We assign the corresponding eigenvalues
\begin{equation}
\mu_A=(-1)^{b_A},
\qquad
\mu_S=(-1)^{b_S},
\end{equation}
so that the single-shot parity is
\begin{equation}
\pi_i=\mu_A\mu_S=\pm1.
\end{equation}
Averaging over repeated runs gives
\begin{equation}
\mathcal{S}_i(t)
=
\langle\pi_i\rangle
=
\frac{
N_{00}+N_{11}-N_{01}-N_{10}
}{
N_{\rm tot}
},
\label{eq:correlation_counts}
\end{equation}
where
\[
N_{\rm tot}
=
N_{00}+N_{01}+N_{10}+N_{11},
\]
and $N_{b_A b_S}$ denotes the number of runs yielding the joint
outcome $(b_A,b_S)$. Thus, $N_{00}$ and $N_{11}$ correspond to
correlated outcomes, while $N_{01}$ and $N_{10}$ correspond to
anti-correlated outcomes in the chosen measurement basis.

Measurements of the Pauli-parity observables considered here are
compatible with several established experimental platforms:
in superconducting circuits, two-qubit parity and Pauli correlations
can be accessed through basis rotations and joint qubit
readout~\cite{Riste2013,Steffen2012}; in polarization-encoded photonic
systems, the corresponding correlations are measured using local
polarization rotations and coincidence-resolved
detection~\cite{James2001,Barbieri2003}; and in trapped-ion systems,
parity correlations are routinely extracted after suitable qubit
rotations followed by state-dependent fluorescence
measurements~\cite{Leibfried2003,Bruzewicz2019}.

\textit{Step 4: Estimation of the instantaneous correlation rate.} The first three steps constitute a single experimental run at a
chosen evolution time $t$. To obtain the time-local information
required by the witness, the complete circuit is repeated at nearby
times $t$ and $t+\delta t$. The derivative of the measured
correlation is then estimated as
\begin{equation}
\dot{\mathcal{S}}_i (t)
\simeq
\frac{
\mathcal{S}_i (t+\delta t)-\mathcal{S}_i (t)
}{
\delta t
}.
\label{eq:finite_difference}
\end{equation}

Since only the system qubit evolves, the correlation satisfies
\begin{equation}
\frac{d}{dt}\mathcal{S}_i(t)
=
{\rm Tr}
\left[
\rho_{AS}(t)
\left(
\sigma_i\otimes
\mathcal L_t^\dagger(\sigma_i)
\right)
\right],
\label{eq:heisenberg_correlation}
\end{equation}
where $\mathcal L_t^\dagger$ denotes the adjoint of the time-local
generator.

It is convenient to introduce the experimentally accessible
logarithmic correlation rate
\begin{equation}
\nu_i(t)
=
\frac{\dot{\mathcal S}_i(t)}{\mathcal S_i(t)}
=
\frac{d}{dt}\ln\!\left|\mathcal S_i(t)\right|,
\qquad
\mathcal S_i(t)\neq 0.
\label{eq:operational_rate}
\end{equation}
For the dynamical models considered below, this directly measurable
quantity is related to the corresponding witness rate
$\widetilde w_i(t)$, with the precise relation established separately
for each model.
The relation between $\widetilde w_i(t)$ and the decay rates depends
on the particular dynamical generator. We derive these relations
below for the channels introduced in Sec.~\ref{Sec:example}; this
will also determine which correlation, and therefore which local
rotation $R_i$, is required in each case. The corresponding
Heisenberg-picture derivations are given in
Appendix~\ref{app:heisenberg}.
\subsubsection{Amplitude damping}\label{sub_amp_damp}

For the amplitude-damping channel, define
\begin{align}
\mathcal{S}_z(t)
=
\langle
\sigma_z\otimes\sigma_z
\rangle_t.
\end{align}

Using the Heisenberg-picture result derived in
Appendix~\ref{app:heisenberg},
\begin{align}
\frac{d}{dt}
\mathcal{S}_z(t)
=
-\gamma(t)\mathcal{S}_z(t),
\end{align}
which immediately gives
\begin{align}
\gamma(t)
=
-
\frac{d}{dt}
\ln \left|\mathcal{S}_z(t)\right|.
\end{align}

Using Eq.~\eqref{w_c_amp},
\begin{align}
\widetilde w_c^{\rm Amp}
=
\frac{d}{dt}
\ln \left|\mathcal{S}_z(t)\right|,
\end{align}

and therefore
\begin{align}
g(t)
=
\max\{0,\widetilde w_c^{\rm Amp}\}.
\end{align}
\begin{figure}[t]
\centering
\includegraphics[width=0.92\columnwidth]{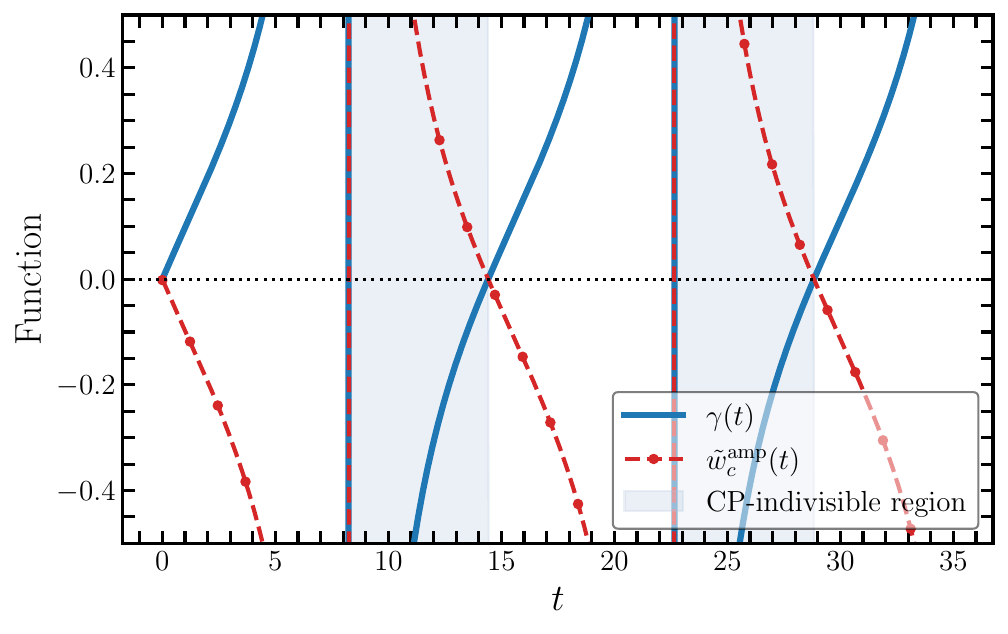}
\caption{\footnotesize
Operational reconstruction of the non-Markovianity witness for the
amplitude-damping channel. The reconstructed witness
$\widetilde{w}_{c}^{\rm Amp}(t)$ and the exact decay rate $\gamma(t)$
are plotted as functions of time for a Lorentzian reservoir with
$\lambda/\gamma_{0}=0.1$, corresponding to the non-Markovian regime
$\lambda<2\gamma_{0}$~\cite{Bellomo2007, mukherjee2015efficiency}. The shaded regions indicate the
CP-indivisible regime, where $\gamma(t)<0$.
}
\label{fig:amp_witness}
\end{figure}

\subsubsection{Thermal channel}
Similarly, Appendix~\ref{app:heisenberg} gives for the
generalized amplitude-damping channel
\begin{align}
\frac{d}{dt}
\mathcal{S}_z(t)
=
-(2N(\Omega_0)+1)\gamma(t)\mathcal{S}_z(t),
\end{align}
leading to
\begin{align}
\widetilde w_c^{\rm Th}
=
\frac{d}{dt}
\ln \left|\mathcal{S}_z(t)\right|,
\end{align}
and
\begin{align}
g(t)
=
\max\{0,\widetilde w_c^{\rm Th}\}.
\end{align}

\vspace{0.5cm}
Since $\mathcal{S}_z(t)>0$ along the Bell-state trajectory for the
amplitude-damping channel, negative decay rates are equivalent to
a positive logarithmic growth rate of the Bell correlation,
\begin{align}
\gamma(t)<0
\iff
\frac{d}{dt}\ln\!\left|\mathcal{S}_z(t)\right|>0
\iff
\widetilde w_c^{\rm Amp}(t)>0.
\end{align}
Figure~\ref{fig:amp_witness} shows the operational reconstruction
of the witness from the experimentally accessible Bell correlation.
The reconstructed witness faithfully reproduces the exact decay rate,
with every interval of CP-indivisible dynamics ($\gamma(t)<0$)
identified by a positive witness value.

For the thermal channel, the witness is simply rescaled as
\begin{align}
\widetilde w_c^{\rm Th}(t)
=
-(2N(\Omega_0)+1)\gamma(t).
\end{align}
Since $(2N(\Omega_0)+1)>0$, the sign of the witness remains unchanged,
so that
\begin{align}
\gamma(t)<0
\iff
\widetilde w_c^{\rm Th}(t)>0.
\end{align}
Therefore, the same operational criterion applies to the thermal channel, although we do not present a separate numerical illustration because it differs from the amplitude-damping case only by the positive scaling factor $(2N(\Omega_0)+1)$.
\subsubsection{Pauli channel}\label{sub_pauli}

For the Pauli dynamical map, define
\begin{align}
\mathcal{S}_i(t)
=
\langle
\sigma_i\otimes\sigma_i
\rangle_t,
\qquad
i=x,y,z.
\end{align}
Using the Heisenberg-picture equations derived in
Appendix~\ref{app:heisenberg},
\begin{align}
\frac{d}{dt}
\mathcal{S}_x(t)
&=
-2(\gamma_y+\gamma_z)\mathcal{S}_x(t),
\nonumber\\
\frac{d}{dt}
\mathcal{S}_y(t)
&=
-2(\gamma_x+\gamma_z)\mathcal{S}_y(t),
\nonumber\\
\frac{d}{dt}
\mathcal{S}_z(t)
&=
-2(\gamma_x+\gamma_y)\mathcal{S}_z(t).
\end{align}

Consequently,
\begin{align}
\widetilde w_x(t)
&=\nu_x(t)
=\frac{d}{dt}\ln\!\left|\mathcal S_x(t)\right|,
\nonumber\\
\widetilde w_y(t)
&=\nu_y(t)
=\frac{d}{dt}\ln\!\left|\mathcal S_y(t)\right|,
\nonumber\\
\widetilde w_z(t)
&=\nu_z(t)
=\frac{d}{dt}\ln\!\left|\mathcal S_z(t)\right|.
\end{align}

Whenever one of the combinations
$\gamma_y+\gamma_z$,
$\gamma_x+\gamma_z$, or
$\gamma_x+\gamma_y$
becomes negative, the corresponding Bell correlation
exhibits a revival, providing a direct operational
signature of information backflow.
\begin{figure}[t]
\centering
\includegraphics[width=0.92\columnwidth]{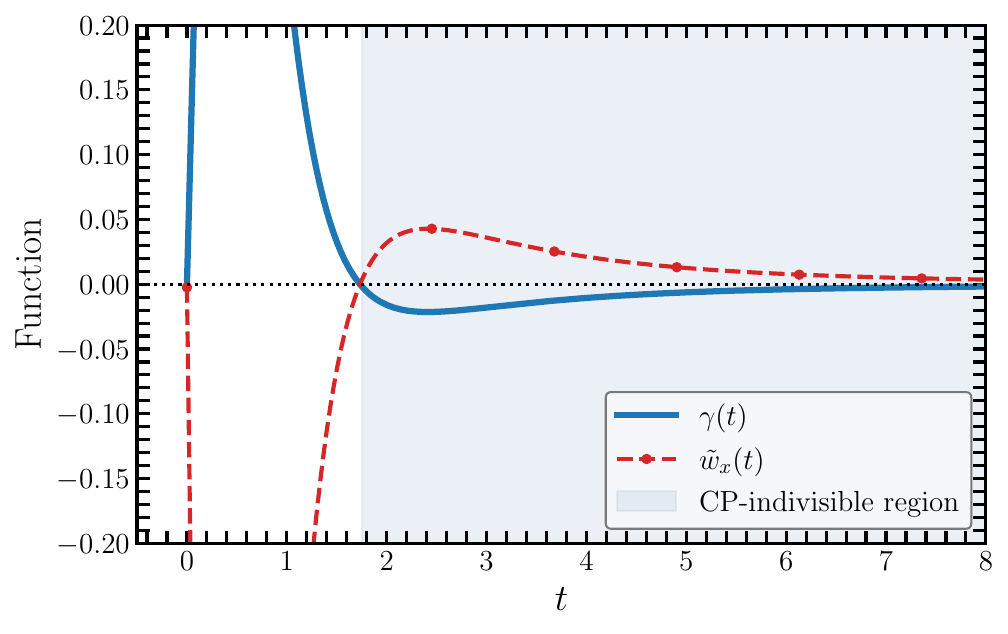}
\caption{\footnotesize
Operational reconstruction of the non-Markovianity witness for the
pure-dephasing channel. The reconstructed witness
$\widetilde{w}_{x}(t)$ and the exact decay rate $\gamma(t)$ are plotted
as functions of time for $\eta=1$, $\Omega_c=1$, and $s=3$, corresponding
to a super-Ohmic reservoir~\cite{Haikka2013}. The shaded regions indicate the
CP-indivisible regime, where $\gamma(t)<0$.
}
\label{fig:dephasing_witness}
\end{figure}

\begin{table*}[t]
\caption{\footnotesize
Number of independent correlations required at each sampled
evolution time for a complete reconstruction of the intermediate
Choi operator and for our witness-based method. Complete
reconstruction requires $15$ nontrivial Pauli expectation values,
whereas our method requires only the correlations listed in
the last column.
}
\label{tab:measurement_overhead}
\begin{ruledtabular}
\begin{tabular}{lccc}
Dynamics
&
Complete Choi reconstruction
&
Our method
&
Measured correlation(s)
\\
\hline
Amplitude damping
&
$15$
&
$1$
&
$\mathcal{S}_z(t)$
\\
Thermal
&
$15$
&
$1$
&
$\mathcal{S}_z(t)$
\\
General Pauli
&
$15$
&
$3$
&
$\mathcal{S}_x(t),\,\mathcal{S}_y(t),\,\mathcal{S}_z(t)$
\end{tabular}
\end{ruledtabular}
\end{table*}
The decay rates are reconstructed from the measured
witness values according to
\begin{align}
\gamma_x(t)
&=
\frac{
\widetilde w_x(t)
-
\widetilde w_y(t)
-
\widetilde w_z(t)
}{4},
\nonumber\\
\gamma_y(t)
&=
\frac{
\widetilde w_y(t)
-
\widetilde w_x(t)
-
\widetilde w_z(t)
}{4},
\nonumber\\
\gamma_z(t)
&=
\frac{
\widetilde w_z(t)
-
\widetilde w_x(t)
-
\widetilde w_y(t)
}{4},
\label{eq:gamma_reconstruction}
\end{align}
from which the RHP non-Markovianity rate follows as
\begin{align}
g(t)
=
2\sum_{i=x,y,z}
\max\{0,-\gamma_i(t)\}.
\label{eq:rhp_reconstructed}
\end{align}

As a physical example, in Sec.~\ref{dephasing} we consider the
microscopic pure-dephasing model. In this case the Pauli
generator of Eq.~\eqref{eq:pauli_master} reduces to
\begin{align}
\gamma_x(t)
=
0,
\qquad
\gamma_y(t)
=
0,
\qquad
\gamma_z(t)
=
\gamma(t),
\end{align}
where the explicit form of $\gamma(t)$ is determined by the
bosonic reservoir spectral density in Eq.~\eqref{eq:ohmic_sd}.

Substituting these rates into the general witness relations
gives
\begin{align}
\widetilde w_z
&=
0,
\nonumber\\
\widetilde w_x
&=
-2\gamma(t),
\nonumber\\
\widetilde w_y
&=
-2\gamma(t),
\end{align}

Consequently,
\begin{align}
g(t)
=
\max\{0,-2\gamma(t)\},
\end{align}
Thus, a negative decay rate ($\gamma(t)<0$) manifests operationally as a revival of the Bell correlations $\mathcal{S}_x(t)$ and $\mathcal{S}_y(t)$, or equivalently as a positive reconstructed witness $\widetilde w_x(t)$. Figure~\ref{fig:dephasing_witness} shows that the reconstructed witness accurately reproduces the exact dephasing rate, with positive witness values identifying all CP-indivisible intervals.

A direct evaluation of \(g(t)\) from Eq.~\eqref{RHP} requires reconstructing the intermediate Choi operator \(\mathcal{C}(t+\epsilon,t)\) at each sampled interval, which in turn involves measuring the \(15\) nontrivial Pauli expectation values of the corresponding two-qubit state. Our method avoids this complete reconstruction by extracting the required information from a small set of experimentally accessible two-qubit correlations. As summarized in Table~\ref{tab:measurement_overhead}, only one correlation is required for amplitude-damping and thermal dynamics, while three correlations are sufficient for general Pauli dynamics at each sampled time.

The same correlations are measured at neighboring evolution times, \(t\) and \(t+\delta t\), allowing their time derivatives to be estimated. These derivatives determine the corresponding time-local decay rates, from which the witness values and, ultimately, the instantaneous RHP non-Markovianity rate \(g(t)\) can be obtained.

\section{Conclusions}\label{Conclusion} 
Quantifying the memory content of an open quantum evolution is a
prerequisite for exploiting non-Markovianity as a resource in quantum
technologies \cite{task1,task2,task3,task4,task5}. However, both the RHP and geometric distance measures require complete knowledge of the intermediate dynamical map, making them experimentally challenging \cite{Bhattacharya20,RHPreview}. In this work, we have developed a witness-based framework that circumvents this requirement and enables direct certification of non-Markovianity from experimentally accessible expectation values.

Specifically, we first show that any non-Markovianity witness capable of detecting a given dynamics provides a certifiable lower bound on the geometric distance measure. We have further
characterized the exact conditions under which this bound is saturated. However, in consideration of the experimental realizability of this witness operator, we focus our study on positive semidefinite witnesses. After suitable normalization, such witnesses constitute valid POVM elements, making their expectation values directly accessible as measurement probabilities. We then proved that every positive witness certifies a lower bound on the RHP measure, with equality
attained by the projector onto the negative eigenspace of the intermediate
Choi operator. Taken together, these results yield the quantitative hierarchy among our proposed framework, RHP measure, and geometric distance measure.

Further to demonstrate the practical utility of our framework, we apply it to two paradigmatic qubit dynamical models: amplitude damping and general Pauli dynamics. For both models, the optimal positive witnesses reduce to simple Bell-parity projectors. We show that a single universal parity witness exactly recovers the RHP measure for amplitude damping, while three parity witnesses suffice for general Pauli dynamics. Importantly, these witnesses can be extracted directly from experimentally accessible Bell correlations, without constructing the intermediate Choi operator. Thus, the measurement cost is reduced to one correlation for amplitude damping and three for general Pauli dynamics, compared with fifteen parameters required for complete two-qubit Choi state reconstruction. The resulting witnesses faithfully reproduce the exact decay rates and identify all CP-indivisible intervals, demonstrating an efficient and operationally accessible approach to quantifying non-Markovianity.

Our analysis suggests several directions for further investigation. A natural
question is whether the parity structure of the optimal witnesses persists in
multiqubit dynamics \cite{PhysRevA.90.052104}, in which case the present scheme would offer an
exponentially larger saving over complete reconstruction of the Choi state. It would also be
worthwhile to investigate the robustness of the reconstruction against statistical noise and finite-time resolution, as well as the use of randomized-measurement techniques such as shadow tomography \cite{Huang2020,PhysRevA.109.022247} to estimate multiple parity witnesses simultaneously from the same experimental data. Finally, an important step toward practical implementation is to realize the present correlation-based reconstruction experimentally. Non-Markovian dynamics has already
been probed in superconducting quantum processors
\cite{White2020,Zhang2022,Gaikwad2024} and in several photonic
platforms \cite{Liu2011,Bernardes2015,Cialdi2019,Wu2020}, making
these systems promising settings for testing our scheme. In particular, it would be interesting to examine whether the RHP rate can be
recovered experimentally using only the reduced set of correlations
identified here, without reconstructing the full intermediate Choi
operator.

%

\appendix
\onecolumngrid
\section{Derivation of the Heisenberg-Picture Relations}
\label{app:heisenberg}

For completeness, we derive the Heisenberg-picture relations used to reconstruct the decay rates from experimentally accessible correlation functions.

Let
\begin{align}
\rho(t)
=
(\mathbb I\otimes\Lambda_{t,0})
\left(
|\Phi^+\rangle\langle\Phi^+|
\right),
\end{align}
where only the system qubit evolves while the ancilla remains isolated.

For an arbitrary system observable $O$, the expectation value is
\begin{align}
\langle
\mathbb I\otimes O
\rangle_t
=
{\rm Tr}
\left[
\rho(t)
(\mathbb I\otimes O)
\right].
\end{align}

Differentiating with respect to time and using
\begin{align}
\dot{\rho}(t)
=
(\mathbb I\otimes\mathcal L_t)(\rho(t)),
\end{align}
gives
\begin{align}
\frac{d}{dt}
\langle
\mathbb I\otimes O
\rangle_t
&=
{\rm Tr}
\left[
(\mathbb I\otimes\mathcal L_t)(\rho(t))
(\mathbb I\otimes O)
\right].
\end{align}

Introducing the adjoint Liouvillian through
\begin{align}
{\rm Tr}
\left[
\mathcal L_t(\rho)O
\right]
=
{\rm Tr}
\left[
\rho\,
\mathcal L_t^\dagger(O)
\right],
\end{align}
one obtains the Heisenberg equation
\begin{align}
\frac{d}{dt}
\langle
\mathbb I\otimes O
\rangle_t
=
{\rm Tr}
\left[
\rho(t)
(\mathbb I\otimes\mathcal L_t^\dagger(O))
\right].
\end{align}

Throughout this appendix we define
\begin{align}
\mathcal{S}_i(t)
=
\langle
\sigma_i\otimes\sigma_i
\rangle_t,
\qquad
i=x,y,z.
\end{align}


\subsection{Amplitude-damping channel}

The master equation is
\begin{align}
\frac{d\rho}{dt}
=
\mathcal L_t(\rho)
=
\gamma(t)
\left(
\sigma_-\rho\sigma_+
-
\frac12
\{\sigma_+\sigma_-,\rho\}
\right).
\end{align}

The corresponding adjoint generator is
\begin{align}
\mathcal L_t^\dagger(O)
=
\gamma(t)
\left(
\sigma_+O\sigma_-
-
\frac12
\{\sigma_+\sigma_-,O\}
\right).
\end{align}

Setting $O=\sigma_z$ gives
\begin{align}
\mathcal L_t^\dagger(\sigma_z)
=
\gamma(t)
(\mathbb I-\sigma_z).
\end{align}

Therefore,
\begin{align}
\frac{d}{dt}
\mathcal{S}_z(t)
&=
{\rm Tr}
\left[
\rho(t)
\left(
\sigma_z
\otimes
\mathcal L_t^\dagger(\sigma_z)
\right)
\right]
\nonumber\\
&=
\gamma(t)
{\rm Tr}
\left[
\rho(t)
\left(
\sigma_z
\otimes
(\mathbb I-\sigma_z)
\right)
\right]
\nonumber\\
&=
\gamma(t)
\left(
\langle
\sigma_z\otimes\mathbb I
\rangle_t
-
\mathcal{S}_z(t)
\right).
\end{align}

Since the channel $\Lambda_{t,0}$ acts only on the system
qubit and is trace preserving, tracing over the system
eliminates its action, i.e.,
\begin{align}
{\rm Tr}_S
\!\left[
(\mathbb I\otimes\Lambda_{t,0})(X)
\right]
=
{\rm Tr}_S(X)
\end{align}
for any bipartite operator $X$. Therefore,
\begin{align}
\rho_A(t)
&=
{\rm Tr}_S[\rho(t)]
\nonumber\\
&=
{\rm Tr}_S
\left[
(\mathbb I\otimes\Lambda_{t,0})
\left(
|\Phi^+\rangle\langle\Phi^+|
\right)
\right]
\nonumber\\
&=
{\rm Tr}_S
\left[
|\Phi^+\rangle\langle\Phi^+|
\right]
=
\frac{\mathbb I}{2}.
\end{align}
Hence,
\begin{align}
\langle
\sigma_z\otimes\mathbb I
\rangle_t
&=
{\rm Tr}
\left[
\rho_A(t)\sigma_z
\right]
\nonumber\\
&=
\frac12{\rm Tr}(\sigma_z)
=
0,
\end{align}
since $\sigma_z$ is traceless.

Consequently
\begin{align}
\frac{d}{dt}
\mathcal{S}_z(t)
=
-\gamma(t)\mathcal{S}_z(t).
\end{align}


\subsection{Thermal channel}

The master equation is
\begin{align}
\frac{d\rho}{dt}
=
\mathcal L_t(\rho)
=
\gamma_\downarrow(t)
\left(
\sigma_-\rho\sigma_+
-
\frac12
\{\sigma_+\sigma_-,\rho\}
\right)+
\gamma_\uparrow(t)
\left(
\sigma_+\rho\sigma_-
-
\frac12
\{\sigma_-\sigma_+,\rho\}
\right).
\end{align}

The adjoint generator becomes
\begin{align}
\mathcal L_t^\dagger(O)
&=
\gamma_\downarrow(t)
\left(
\sigma_+O\sigma_-
-
\frac12
\{\sigma_+\sigma_-,O\}
\right)+
\gamma_\uparrow(t)
\left(
\sigma_-O\sigma_+
-
\frac12
\{\sigma_-\sigma_+,O\}
\right).
\end{align}

Using
\begin{align}
\sigma_+\sigma_-
=
\frac{\mathbb I-\sigma_z}{2},
\qquad
\sigma_-\sigma_+
=
\frac{\mathbb I+\sigma_z}{2},
\end{align}
one finds
\begin{align}
\mathcal L_t^\dagger(\sigma_z)
=
(\gamma_\downarrow-\gamma_\uparrow)\mathbb I
-
(\gamma_\downarrow+\gamma_\uparrow)\sigma_z.
\end{align}

Therefore,
\begin{align}
\frac{d}{dt}
\mathcal{S}_z(t)
&=
(\gamma_\downarrow-\gamma_\uparrow)
\langle
\sigma_z\otimes\mathbb I
\rangle_t -
(\gamma_\downarrow+\gamma_\uparrow)
\mathcal{S}_z(t).
\end{align}

Again,
\begin{align}
\langle
\sigma_z\otimes\mathbb I
\rangle_t
=
0,
\end{align}
so that
\begin{align}
\frac{d}{dt}
\mathcal{S}_z(t)
=
-
(\gamma_\downarrow+\gamma_\uparrow)
\mathcal{S}_z(t).
\end{align}

For a thermal reservoir satisfying
$\gamma_\downarrow=(N(\Omega_0)+1)\gamma(t)$ and
$\gamma_\uparrow=N(\Omega_0)\gamma(t)$,
this reduces to
\begin{align}
\frac{d}{dt}
\mathcal{S}_z(t)
=
-(2N(\Omega_0)+1)\gamma(t)\mathcal{S}_z(t).
\end{align}


\subsection{Pauli dynamical map}

The master equation is
\begin{align}
\frac{d\rho}{dt}
&=
\mathcal L_t(\rho)
\nonumber\\
&=
\gamma_x(t)
(\sigma_x\rho\sigma_x-\rho)
+
\gamma_y(t)
(\sigma_y\rho\sigma_y-\rho)
+
\gamma_z(t)
(\sigma_z\rho\sigma_z-\rho).
\end{align}

The adjoint Liouvillian is
\begin{align}
\mathcal L_t^\dagger(O)
&=
\gamma_x(t)
(\sigma_xO\sigma_x-O)
+
\gamma_y(t)
(\sigma_yO\sigma_y-O)
+
\gamma_z(t)
(\sigma_zO\sigma_z-O).
\end{align}

Using
\begin{align}
\sigma_i\sigma_j\sigma_i
=
\begin{cases}
\sigma_j,&i=j,\\
-\sigma_j,&i\neq j,
\end{cases}
\end{align}
one obtains
\begin{align}
\mathcal L_t^\dagger(\sigma_x)
&=
-2(\gamma_y+\gamma_z)\sigma_x,
\\
\mathcal L_t^\dagger(\sigma_y)
&=
-2(\gamma_x+\gamma_z)\sigma_y,
\\
\mathcal L_t^\dagger(\sigma_z)
&=
-2(\gamma_x+\gamma_y)\sigma_z.
\end{align}

Consequently,
\begin{align}
\frac{d}{dt}
\mathcal{S}_x(t)
=
-2(\gamma_y+\gamma_z)\mathcal{S}_x(t),
\\
\frac{d}{dt}
\mathcal{S}_y(t)
=
-2(\gamma_x+\gamma_z)\mathcal{S}_y(t),
\\
\frac{d}{dt}
\mathcal{S}_z(t)
=
-2(\gamma_x+\gamma_y)\mathcal{S}_z(t).
\end{align}

Unlike the amplitude-damping channels, no identity operator appears in the adjoint action of the Pauli generator. Therefore these equations hold for arbitrary two-qubit states and do not rely on the Bell-state trajectory.
\end{document}